%% file: main.tex
\documentclass{article}

\usepackage[preprint]{neurips_2026}

\usepackage{float}
\usepackage{placeins}

\usepackage[utf8]{inputenc}
\usepackage[T1]{fontenc}
\usepackage{hyperref}
\usepackage{url}
\usepackage{booktabs}
\usepackage{amsfonts}
\usepackage{amsmath}
\usepackage{amssymb}
\usepackage{mathtools}
\usepackage{amsthm}
\usepackage{nicefrac}
\usepackage{microtype}
\usepackage{xcolor}
\usepackage{graphicx}
\usepackage{subcaption}
\usepackage{placeins}
\usepackage{algorithm}
\usepackage{algpseudocode}
\usepackage{textcomp}
\usepackage{enumitem}
\usepackage[capitalize,noabbrev]{cleveref}

\theoremstyle{plain}
\newtheorem{theorem}{Theorem}[section]
\newtheorem{proposition}[theorem]{Proposition}

\newtheorem{corollary}[theorem]{Corollary}
\theoremstyle{definition}

\theoremstyle{remark}
\newtheorem{remark}[theorem]{Remark}

\title{MAGIC: Multi-Step Advantage-Gated Causal Influence for Multi-agent Reinforcement Learning}

\author{%
\normalfont
Haohan Yu\textsuperscript{1} \quad
Lu Wang\textsuperscript{2} \quad
Jinmiao Cong\textsuperscript{1} \quad
Shengzhi Wang\textsuperscript{1} \quad
Chanjuan Liu\textsuperscript{1,*}\\[2pt]
\textsuperscript{1}Dalian University of Technology \quad
\textsuperscript{2}Microsoft, Beijing, China\\[2pt]
\texttt{15040488807@mail.dlut.edu.cn} \quad
\texttt{wlu@microsoft.com} \quad
\texttt{cjm111@mail.dlut.edu.cn}\\
\texttt{wangshengzhi@mail.dlut.edu.cn} \quad
\texttt{chanjuanliu@dlut.edu.cn}\\[2pt]
\textsuperscript{*}Corresponding author
}

\begin{document}

\maketitle

\begin{abstract}
A key challenge in multi-agent reinforcement learning (MARL) lies in designing
learning signals that effectively promote coordination among agents. Designing such signals requires estimating how one agent's current action affects
its teammates over future interaction steps. To address this, we introduce Multi-step Advantage-Gated Interventional
Causal MARL (MAGIC), a framework that estimates multi-step action effects
between agents and selectively converts them into intrinsic rewards.
MAGIC uses counterfactual action interventions to compare teammate futures under factual and counterfactual branches, and introduces a gate based on advantage to direct exploration toward beneficial behaviors aligned with the task goal. Experiments on Multi-Agent Particle Environments (MPE) and StarCraft micromanagement benchmarks (SMAC and SMACv2) show that MAGIC consistently outperforms leading prior methods, with average relative final performance improvements of 26.9\% and 10.1\%, respectively.
\end{abstract}

\section{Introduction}

Cooperative multi-agent reinforcement learning (MARL) studies how multiple agents learn policies to optimize a shared team objective~\citep{lowe2017multi,rashid2018qmix,yu2022surprising}. 
In this setting, each agent acts from its local observation, but the value of its action often depends on how it affects other agents and the future team outcome. 
This makes coordination difficult when team rewards are sparse or delayed, because an action that helps a teammate may not receive immediate feedback from the environment~\citep{ma2022elign,liu2023lazy,devidze2022exploration,forbes2024potential}. 
Centralized training with decentralized execution (CTDE) is widely used to alleviate this difficulty~\citep{lowe2017multi,rashid2018qmix,yu2022surprising}. 
It allows the learner to use centralized information during training while keeping each agent's policy decentralized during execution. 
However, the feedback provided by the team reward remains indirect for each individual action. 
The shared return can evaluate the outcome of a joint behavior, but it does not directly indicate whether one agent's current local action will help its teammates coordinate in the future. 
This motivates an additional training signal that can make the cooperative value of local actions more explicit.

One common approach is to use inter-agent influence as an intrinsic reward~\citep{jaques2019social,ma2022elign,li2022pmic,du2024scic}. 
Early influence-based methods encourage coordination by measuring how much one agent affects others through immediate action responses, trajectory dependence, or mutual information~\citep{jaques2019social,li2022pmic,jiang2024mace,liu2024iie}. 
These signals show that influence can be useful for cooperative learning, but they also leave two key issues. 
First, useful influence may be delayed. 
A current action may have little effect on a teammate in the next transition, but may change the teammate's future position, route, or attack timing after several interaction steps. 
Second, strong influence is not necessarily useful. 
An agent can strongly affect its teammates while still reducing the team return. 
Thus, a useful coordination signal should capture delayed influence and should also be aligned with task improvement~\citep{li2022pmic,forbes2024potential,qin2025gradps}.

\begin{figure}[t]
  \centering
  \setlength{\abovecaptionskip}{2pt}
  \setlength{\belowcaptionskip}{-6pt}
  \includegraphics[width=0.245\linewidth]{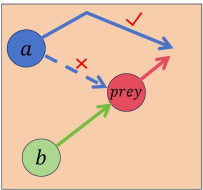}
  \hspace{0.018\linewidth}
  \includegraphics[width=0.245\linewidth]{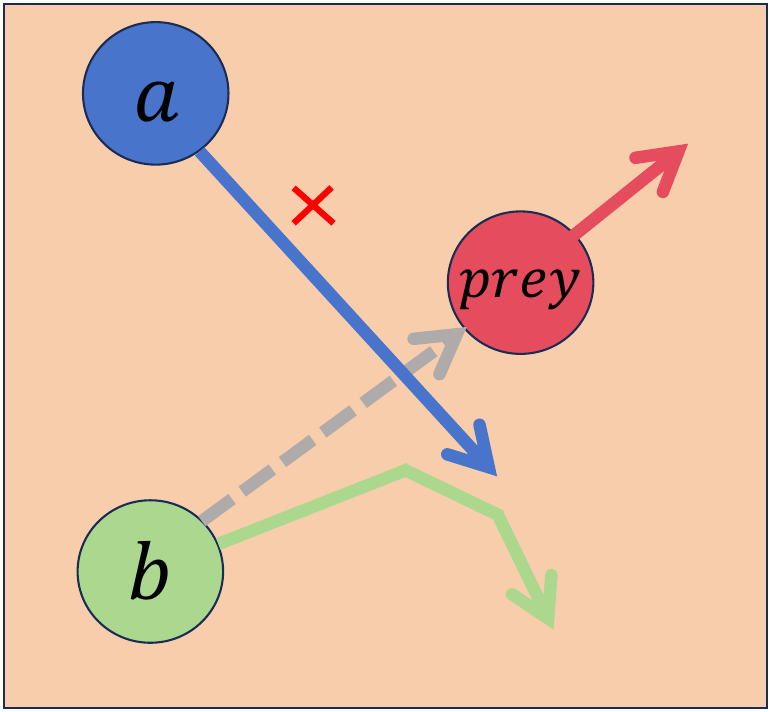}
  \caption{Motivating predator--prey example. Left: a delayed give-way action helps teammate $b$; right: a high-influence action disrupts $b$ and hurts team return.}
  \label{fig:chase-intuition}
\end{figure}

Figure~\ref{fig:chase-intuition} illustrates delayed useful influence and harmful influence in a cooperative predator-prey task.
In the left panel, agent $a$'s give-way action does not immediately bring it closer to the prey, but helps agent $b$ reach a better future capture position.
In the right panel, agent $a$ strongly changes agent $b$'s route, but this interference moves the team away from a better capture outcome.
These examples make the desired training signal more specific.
It should look beyond immediate action effects and encourage such effects only when they support the team objective.

Based on this principle, we propose Multi-step Advantage-Gated Interventional Causal MARL (MAGIC). 
MAGIC augments a standard CTDE learner with a causal and task-aligned intrinsic reward computed only during training. 
To estimate whether one agent's current action affects its teammates' future states, MAGIC constructs factual and counterfactual branches at each sampled time step. 
The factual branch keeps the realized action of the source agent, while the counterfactual branches replace only this source action with valid alternatives. 
All branches share the same current state and the same actions of the other agents. 
MAGIC then compares the teammate futures produced by these branches over multiple rollout steps. 
If replacing the source agent's current action leads to clearly different teammate futures, this indicates that the realized action has a strong counterfactual action effect in the current situation. 
This comparison is different from measuring statistical dependence between actions and future states in collected trajectories. 
It directly asks whether the teammates' future states would change if only the source agent's action were replaced in the same situation.

To compare the teammate futures produced by these branches over multiple rollout steps, MAGIC uses a learned forward model to roll out the factual and counterfactual branches from the same sampled decision context.
In this design, the forward model is not used as an exact long-term simulator.
Its role is to preserve the differences between branches that are relevant to teammate future states within a finite horizon.
When the rollout keeps the relative strength of factual and counterfactual effects separable, MAGIC can estimate action effects reliably even if the absolute predicted states are imperfect.

After the action effect score is obtained, MAGIC aligns it with the task objective through an extrinsic advantage gate.
The action effect score is agent-specific because it measures the effect of each agent's own realized action on its teammates' futures.
The gate uses an extrinsic advantage computed at the team level to estimate whether the realized team transition improves expected task return.
This gate keeps more of the action effect signal when the transition benefits the task and suppresses it when the transition is harmful.
As a result, MAGIC encourages agents to produce effects on teammates only when such effects are supported by task improvement.

Our main contributions are as follows.
\begin{itemize}[leftmargin=*]
    \item We propose a multi-step counterfactual action effect estimator. It measures how replacing one agent's current action changes its teammates' future states over multiple rollout steps, which allows the learning signal to capture delayed coordination effects.
    
    \item We introduce an advantage-gated intrinsic reward that converts action effects into rewards only when they are supported by task improvement. 
The gate retains more intrinsic reward for beneficial team transitions and suppresses high-effect behaviors that reduce the expected task return.
    
    \item We evaluate MAGIC across multiple MARL benchmarks and task families, including
Multi-Agent Particle Environments (MPE), StarCraft Multi-agent Challenge (SMAC), and
SMACv2. MAGIC outperforms representative methods from the two related lines studied
in this paper, including influence-based intrinsic-reward methods and task-aligned
coordination methods. Beyond final performance, we use ablations to identify the role
of each component and diagnostic studies to examine the reliability of the proposed method.
\end{itemize}

\section{Related Work}

Our work studies training signals for improving coordination in cooperative MARL. 
Related efforts include influence-based intrinsic rewards and task-aligned coordination signals.

\noindent\textbf{Influence-based intrinsic rewards.}
Intrinsic rewards are widely used to encourage exploration and coordination under sparse or delayed rewards~\citep{du2019liir,ma2022elign,liu2023lazy,devidze2022exploration,forbes2024potential}. 
In MARL, recent social-influence and coordinated-exploration methods encourage agents to affect each other through mutual information, action response, or interaction-state discovery~\citep{jiang2024mace,liu2024iie}. 
These methods show the value of inter-agent influence, but mainly capture correlation or immediate response. 
SCIC~\citep{du2024scic} estimates single-step interventional causal influence under CTDE and uses it as an intrinsic reward, but it does not capture delayed action effects over multiple rollout steps or distinguish task-beneficial effects from harmful high-effect behaviors. Unlike mutual information based signals, MAGIC directly computes teammate future differences between factual and counterfactual branches, which avoids a separate mutual information critic in the action effect module and makes the score easier to diagnose.

\noindent\textbf{Task-aligned coordination signals.}
Task consistency has long been recognized as important in reward shaping~\citep{devidze2022exploration,wang2023efficient,forbes2024potential}. 
PMIC~\citep{li2022pmic} aligns mutual-information-based coordination with task return by increasing mutual information on high-return trajectories and decreasing it on low-return trajectories. 
However, trajectory-level alignment provides coarse and delayed feedback, so it cannot directly decide whether a specific action effect at the current transition should be encouraged. 
GradPS~\citep{qin2025gradps} improves cooperation by adapting parameter sharing and policy diversity, which helps form more suitable shared representations for coordination. 
However, this adjustment acts at the policy-structure level and does not provide an immediate training signal for evaluating whether an agent's current action benefits its teammates.

\section{Method}

We propose MAGIC, an intrinsic reward module used during training on top of a standard CTDE backbone.
As shown in Figure~\ref{fig:method_overview}, MAGIC compares factual and counterfactual branches that differ only in the source agent's action.
A learned forward model rolls out these branches for a finite horizon, and teammate future differences are aggregated into an action effect score $c_i(t)$.
An extrinsic team advantage gate then filters this score to form the intrinsic reward $r^{\mathrm{int}}_{i,t}$, which is added to the environment reward during training.
The module is removed during execution.
Section~3.1 defines the counterfactual action effect, Section~3.2 describes action effect estimation and aggregation with learned forward model rollouts, and Section~3.3 presents the advantage-gated training objective.

\label{sec:method}

\begin{figure}[t]
  \centering
  \includegraphics[width=\textwidth]{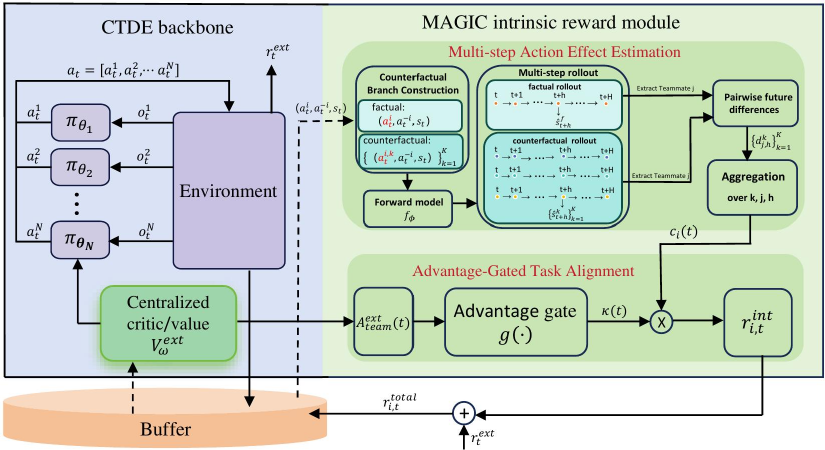}
\caption{Overview of \textsc{MAGIC}. 
\textsc{MAGIC} adds an intrinsic reward module used during training to a CTDE backbone. 
It builds factual and counterfactual branches by replacing only the source agent's action, rolls out predicted centralized states with a learned forward model, extracts teammate future features, and aggregates factual and counterfactual differences into an action effect score. 
An extrinsic team-advantage gate modulates this score before it is added as intrinsic reward. 
The module is removed during execution.}
    \label{fig:method_overview}
\end{figure}

\subsection{Counterfactual Action Effect}
\label{sec:counterfactual_action_effect}

The quantity of interest is the causal effect of a source agent's realized action on its teammates' future states under the same decision context.
This differs from statistical dependence in collected trajectories.
A source action may be correlated with a teammate future simply because it appears in states that already lead to that future.
Such dependence does not tell whether the teammate future would change if only the source action were replaced.

We therefore define the action effect through an intervention that replaces the source action.
For source agent $i$ at time $t$, the decision context consists of the centralized state $s_t$ and the joint action of the other agents $a_t^{-i}$.
The factual branch keeps the realized source action, written as $(a_t^i,a_t^{-i},s_t)$.
The counterfactual branches use the same $a_t^{-i}$ and $s_t$, but replace $a_t^i$ with $K$ valid alternatives, written as $\{(a_t^{i,k},a_t^{-i},s_t)\}_{k=1}^{K}$.
Thus, the only changed component is the source action.

The effect of the realized action is then defined by the difference between teammate futures under the factual and counterfactual branches.
If replacing $a_t^i$ leads to substantially different teammate futures, then the realized action has a strong counterfactual action effect in the current context.
If the teammate futures remain similar after replacement, then the realized action has a weak effect in that context.

\subsection{Multi-Step Action Effect Estimation}
\label{sec:multi_step_action_effect_estimation}

The factual and counterfactual branches defined above specify which action inputs should be compared.
To obtain their future outcomes, MAGIC uses a learned forward model during training.
The forward model takes the joint action and centralized state as input and predicts the next centralized state,
\begin{equation}
    \hat{s}_{t+1} = f_{\phi}(a_t, s_t).
\end{equation}
It is trained from real environment transitions with a one step prediction loss.
During action effect estimation, the model starts from the same sampled decision context and rolls out the factual and counterfactual branches over a finite horizon.
These predicted branch futures are then used for teammate feature comparison and aggregation, rather than being treated as action effect scores directly.
The exact loss and rollout details are given in Appendix~\ref{app:forward-rollout}.

For a source agent $i$, the factual branch starts from $(a_t^i, a_t^{-i}, s_t)$, while the $k$-th counterfactual branch starts from $(a_t^{i,k}, a_t^{-i}, s_t)$. At the first rollout step, the state $s_t$ and the non-source actions $a_t^{-i}$ are shared, and only the source action differs across branches. After this first step, all agents act according to their current policies on the model-predicted states. In this way, the rollout is closed-loop, allowing the effect of the source action to propagate through later policy responses.

The forward model rolls out full centralized states rather than isolated teammate states. We denote the factual rollout states by $\{\hat{s}_{t+h}^{f}\}_{h=1}^{H}$ and the states of the $k$-th counterfactual rollout by $\{\hat{s}_{t+h}^{k}\}_{h=1}^{H}$. For each horizon $h$, teammate features are extracted from these predicted centralized states before computing the effect. Let $z_j(\hat{s})$ denote the normalized future-state feature vector of teammate $j$ extracted from a predicted centralized state $\hat{s}$. The pairwise future difference between the factual branch and the $k$-th counterfactual branch is
\begin{equation}
    d_{j,h}^{(k)}
    =
    \operatorname{dist}
    \left(
    z_j(\hat{s}_{t+h}^{f}),
    z_j(\hat{s}_{t+h}^{k})
    \right),
\end{equation}
where $\operatorname{dist}(\cdot,\cdot)$ is computed in the normalized teammate-feature space. Thus, for each teammate $j$ and horizon $h$, MAGIC obtains a set of pairwise differences $\{d_{j,h}^{(k)}\}_{k=1}^{K}$ by comparing the factual future with each counterfactual future.

The pairwise differences are first averaged over the $K$ counterfactual branches,
\begin{equation}
    d_{j,h}
    =
    \frac{1}{K}
    \sum_{k=1}^{K}
    d_{j,h}^{(k)} .
\end{equation}
The resulting effects are then aggregated over teammates and rollout horizons to obtain the multi-step action effect score for source agent $i$,
\begin{equation}
    \tilde{c}_i(t)
    =
    \sum_{h=1}^{H}
    w_h
    \frac{1}{N-1}
    \sum_{j \neq i}
    d_{j,h},
    \qquad
    w_h \geq 0,\quad
    \sum_{h=1}^{H} w_h = 1 .
\end{equation}
Here, $w_h$ controls the contribution of each rollout horizon. When $H=1$, the score only captures immediate teammate-state changes. When $H>1$, the score can include effects that appear after several interaction steps.

To keep the intrinsic signal on a stable scale, we normalize and clip $\tilde{c}_i(t)$ with running statistics, yielding the scaled action effect score $c_i(t)$ used by the reward module. 
Details of counterfactual action sampling, forward model rollouts, teammate-feature normalization, action effect score scaling, and rollout pseudocode are provided in Appendix~\ref{app:implementation_details}--\ref{app:training-algorithm}.

\subsection{Advantage-Gated Training Objective}
\label{sec:advantage_gated_training_objective}

The scaled action effect score $c_i(t)$ measures how strongly the realized action of agent $i$ changes teammate futures. This effect is not necessarily aligned with the task objective. A large action effect may help teammates coordinate, but it may also disturb their future states in a harmful way. We therefore gate the scaled action effect score with an extrinsic team advantage before using it as intrinsic reward.

The extrinsic team advantage measures whether the realized team transition is beneficial under the original environment reward. Using the centralized value estimate associated with the extrinsic task reward, we compute
\begin{equation}
    A_{\mathrm{team}}^{\mathrm{ext}}(t)
    =
    r_t^{\mathrm{ext}}
    +
    \gamma V_{\omega}^{\mathrm{ext}}(s_{t+1})
    -
    V_{\omega}^{\mathrm{ext}}(s_t).
\end{equation}
The superscript $\mathrm{ext}$ indicates that this advantage is computed with respect to the original task reward rather than the shaped reward. This keeps the gate tied to the task objective.

The gate is defined as a bounded monotone function of the extrinsic team advantage,
\begin{equation}
    \kappa(t)
    =
    g\!\left(A_{\mathrm{team}}^{\mathrm{ext}}(t)\right),
    \qquad
    0 \leq \kappa(t) \leq 1 .
\end{equation}
A larger extrinsic advantage leads to a larger gate value, so action effects observed in beneficial team transitions contribute more to the intrinsic reward. A smaller or negative extrinsic advantage suppresses the contribution of the scaled action effect score.

The intrinsic reward for source agent $i$ is then
\begin{equation}
    r_{i,t}^{\mathrm{int}}
    =
    \lambda_{\mathrm{int}}
    \kappa(t)
    c_i(t),
\end{equation}
where $\lambda_{\mathrm{int}}$ controls the strength of the intrinsic reward. The total reward used for policy learning is
\begin{equation}
    r_{i,t}^{\mathrm{total}}
    =
    r_t^{\mathrm{ext}}
    +
    r_{i,t}^{\mathrm{int}} .
\end{equation}
The gate $\kappa(t)$ is intentionally defined at the team level. MAGIC separates two roles. The score $c_i(t)$ measures whether the realized action of source agent $i$ changes its teammates' future states. The gate $\kappa(t)$ decides whether the realized team transition is aligned with the extrinsic task objective. Thus, the gate is not used as an individual credit-assignment estimator. It acts as a conservative filter for task alignment shared by all agents at the same transition. Agent specificity is still preserved by \(c_i(t)\), since different source agents receive different intrinsic rewards through their own action effect scores.

The CTDE learner is optimized with $r_{i,t}^{\mathrm{total}}$. The forward model is updated with the one-step prediction loss defined in Section~\ref{sec:multi_step_action_effect_estimation}. The forward model, counterfactual rollouts, action effect computation, and intrinsic reward are used only during training. During execution, agents act with their decentralized policies without access to the MAGIC module. Details of the gate implementation, advantage normalization, reward scaling, full training pseudocode, and execution-time behavior are provided in Appendix~\ref{app:training_details}--\ref{app:runtime}.

\subsection{Theoretical Properties and Scope of the Analysis}

We provide the formal properties of MAGIC in Appendix~\ref{app:theory}--\ref{app:gate-theory}.
The analysis focuses on the action effect estimator defined above rather than on a mutual-information surrogate. 
First, we show that the unnormalized multi-step action effect score is nonnegative and becomes zero when factual and counterfactual branches induce identical teammate futures. 
Second, we show that the one-step version can be blind to delayed action effects, while the multi-step score detects effects that appear after several interaction steps. 
Third, we bound the error of the estimated action effect score in terms of forward model rollout errors. 
Finally, we analyze the extrinsic advantage gate as a bounded task-alignment multiplier.

The theoretical statements are separated from the normalization used for numerical stability. 
The exact nonnegativity and zero-effect properties are stated for the unnormalized score \(\tilde c_i(t)\). 
In implementation, we apply a bounded normalization and clipping map before constructing the intrinsic reward. 
Appendix~\ref{app:normalization-theory} shows that this transformation keeps the reward bounded and, when the map is monotone, does not reverse the relative ordering of action effect scores except for possible saturation caused by clipping. 
Thus, the theory describes the underlying action effect estimator, while the practical normalization controls scale without changing the intended comparison between factual and counterfactual branches.

\section{Experiments}
\label{sec:experiments}

We evaluate \textsc{MAGIC} on two benchmark families that cover complementary coordination regimes. 
The first family uses three continuous control tasks from the Multi-agent Particle Environment (MPE)~\citep{lowe2017multi}: \emph{Predator Prey}, \emph{Cooperative Navigation}, and \emph{Cooperative Competitive}. 
These tasks test cooperative pursuit, landmark coverage, and mixed cooperative-competitive interaction in particle world environments. 
The second family uses StarCraft micromanagement benchmarks, including SMAC~\citep{samvelyan2019starcraft} and SMACv2~\citep{ellis2023smacv2}, where agents coordinate under partial observability, discrete actions, and delayed team rewards. 
Across these benchmarks, we first test whether the proposed intrinsic signal improves standard MARL performance, then examine whether the same design remains effective in harder micromanagement tasks, and finally analyze which components drive the gains and when the action effect score estimated with learned forward model rollouts remains reliable.

\paragraph{Baselines and backbones.}
We use matched protocols within each benchmark group for fair comparison. 
On MPE, all methods share the same MADDPG CTDE backbone~\citep{lowe2017multi}, following the common protocol used by prior MPE methods with intrinsic rewards, including SI~\citep{jaques2019social}, PMIC~\citep{li2022pmic}, and SCIC~\citep{du2024scic}. 
MADDPG is the plain backbone without intrinsic rewards.

On SMAC and SMACv2, methods based on policy gradients are evaluated under a unified MAPPO CTDE protocol~\citep{yu2022surprising} implemented in PyMARL2. 
We include MAPPO as the plain policy gradient backbone, QMIX~\citep{rashid2018qmix} as a representative value decomposition baseline, PMIC as a mutual information baseline aligned with task return, SCIC as a single step causal influence baseline using intrinsic rewards, and GradPS~\citep{qin2025gradps} as a parameter sharing method for task aligned coordination. 
Within each benchmark group, methods use matched observation spaces, action masks, training budgets, random seeds, and evaluation procedures. 
Full implementation details are provided in Appendix~\ref{app:exp-details}.

\paragraph{Evaluation protocol.}
All experiments are run with five random seeds under matched training budgets, observation settings, and evaluation procedures within each benchmark group. 
For MPE, we use episodic team return as the main metric, since it is the standard task metric for these continuous-control environments and directly reflects team reward.
We report final return and AUC, where AUC is computed from the team-return learning curve and summarizes learning efficiency over training. 
We also include learning curves to show convergence behavior. 
For SMAC and SMACv2, we use win rate as the main task metric and report final win rate, standard deviation across seeds, and AUC computed from the win-rate learning curve. 
Additional environment configurations, implementation details, hyperparameters, statistical tests, hardware setup, and runtime analysis are provided in Appendix~\ref{app:exp-details}.

\subsection{Benchmark Performance}
\label{subsec:benchmark_performance}

\paragraph{MPE continuous-control tasks.}
Figure~\ref{fig:mpe_curves} compares learning curves on \emph{Predator Prey}, \emph{Cooperative Navigation}, and \emph{Cooperative Competitive}. 
\textsc{MAGIC} improves convergence and final team return across all three MPE tasks. 
With relative gains normalized by the magnitude of the SCIC return, \textsc{MAGIC} improves over SCIC by 26.9\%, 17.5\%, and 36.4\% on \emph{Predator Prey}, \emph{Cooperative Navigation}, and \emph{Cooperative Competitive}, respectively, yielding an average relative final-return gain of 26.9\%.
These results indicate that the proposed signal is useful not only in sparse pursuit but also in dense reward and mixed cooperative-competitive interaction regimes. 
Appendix~\ref{app:mpe-full-results} reports the full MPE scalar metrics, including standard deviations, best returns, AUC, significance tests, and a 10-agent Predator Prey check where \textsc{MAGIC} remains strongest among the compared methods.

\paragraph{SMAC and SMACv2 micromanagement tasks.}
Table~\ref{tab:smac_main} reports final win rate and AUC over five seeds, with standard deviations provided in Appendix~\ref{app:smac-full-results}. 
\textsc{MAGIC} obtains the highest average performance, improving over SCIC by 10.1\% in average final win rate and 16.3\% in average AUC.
The gains are larger on harder maps such as \texttt{corridor}, \texttt{6h\_vs\_8z}, and \texttt{Protoss5v5}, where agents must coordinate under bottlenecks, heterogeneous combat roles, or stronger partial observability. 
Compared with GradPS, \textsc{MAGIC} also achieves higher average win rate and AUC. 
These results suggest that multi-step action effect estimation remains useful when coordination effects are not immediately visible in the next transition.

\begin{figure}[t]
  \centering
  \begin{subfigure}[b]{0.32\linewidth}
    \centering
    \includegraphics[width=\linewidth]{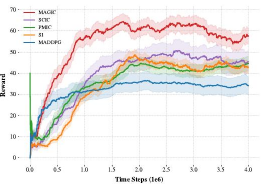}
    \subcaption{Predator Prey}
  \end{subfigure}
  \hfill
  \begin{subfigure}[b]{0.32\linewidth}
    \centering
    \includegraphics[width=\linewidth]{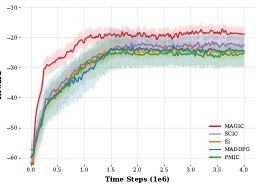}
    \subcaption{Cooperative Navigation}
  \end{subfigure}
  \hfill
  \begin{subfigure}[b]{0.32\linewidth}
    \centering
    \includegraphics[width=\linewidth]{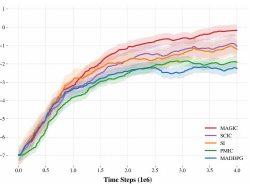}
    \subcaption{Cooperative Competitive}
  \end{subfigure}
  \caption{Learning curves on MPE tasks over five random seeds. Shaded regions denote standard deviation across seeds.}
  \label{fig:mpe_curves}
\end{figure}

\begin{table}[t]
\centering
\caption{Benchmark performance on SMAC and SMACv2 over five random seeds. Each entry reports final Win\% / AUC. AUC is computed from the win-rate learning curve. Standard deviations are provided in Appendix~\ref{app:smac-full-results}.}
\label{tab:smac_main}
\small
\setlength{\tabcolsep}{3pt}
\begin{tabular}{lccccccc}
\toprule
Method & 3s5z & 5m\_vs\_6m & corridor & 6h\_vs\_8z & MMM2 & Protoss5v5 & Avg. \\
\midrule
QMIX  & 92.4 / 78.4 & 78.5 / 52.3 & 48.2 / 25.1 & 58.6 / 32.5 & 74.2 / 42.1 & 36.8 / 15.4 & 64.8 / 41.0 \\
MAPPO & 94.1 / 80.2 & 90.5 / 68.5 & 54.8 / 30.6 & 71.2 / 44.1 & 81.4 / 51.5 & 47.5 / 22.3 & 73.3 / 49.5 \\
PMIC  & 89.3 / 72.1 & 83.7 / 58.4 & 61.2 / 35.2 & 65.8 / 38.6 & 72.1 / 41.8 & 44.3 / 19.5 & 69.4 / 44.3 \\
GradPS & 94.8 / 81.5 & 91.2 / 70.1 & 64.5 / 38.4 & 74.0 / 46.8 & 81.8 / 53.2 & 49.5 / 24.1 & 76.0 / 52.4 \\
SCIC  & 95.2 / 82.3 & 91.8 / 71.4 & 72.6 / 45.8 & 76.5 / 49.2 & 80.3 / 51.8 & 52.4 / 26.5 & 78.1 / 54.5 \\
MAGIC & \textbf{97.8 / 86.5} & \textbf{95.6 / 78.2} & \textbf{83.4 / 55.6} & \textbf{85.2 / 59.4} & \textbf{87.3 / 62.3} & \textbf{66.5 / 38.1} & \textbf{86.0 / 63.4} \\
\bottomrule
\end{tabular}
\end{table}

\subsection{Ablation Study of Multi-Step Action Effects and Advantage Gating}
\label{subsec:component_analysis}

We next examine which parts of \textsc{MAGIC} are responsible for the improvement.
The two components of interest are action effect estimation over multiple rollout steps and advantage gating.
We use two complementary analyses.
The first analysis ablates individual modules on MPE Predator Prey, where the full set of \textsc{MAGIC} components can be cleanly isolated.
The second analysis compares the full method with a one step variant and an ungated variant on representative SMAC maps, testing whether the two components remain useful when coordination is harder and rollout error becomes more relevant.

\begin{table}[t]
\centering
\caption{Component analysis across MPE and SMAC. Panel A reports final and best returns on MPE Predator Prey. Panel B reports final Win\% on representative SMAC maps, with Avg. denoting the average across the three maps. Standard deviations and full MPE scalar metrics are provided in Appendix~\ref{app:component-full-results}.}
\label{tab:component_analysis}
\small
\setlength{\tabcolsep}{4pt}

\begin{tabular}{lccc}
\toprule
\multicolumn{4}{c}{\textbf{Panel A. Module ablation on MPE Predator Prey}} \\
\midrule
Method & Final return & Best return & Change from MAGIC \\
\midrule
MAGIC & \textbf{57.6} & \textbf{62.1} & -- \\
MAGIC w/o Advantage Gating & 47.3 & 48.9 & $-17.9\%$ \\
MAGIC H=1 Action Effect & 41.8 & 43.4 & $-27.4\%$ \\
\bottomrule
\end{tabular}

\vspace{0.6em}

\begin{tabular}{lcccc}
\toprule
\multicolumn{5}{c}{\textbf{Panel B. Core component attribution on SMAC}} \\
\midrule
Method & \texttt{corridor} & \texttt{5m\_vs\_6m} & \texttt{MMM2} & Avg. \\
\midrule
MAGIC H=3 & \textbf{83.4} & \textbf{95.6} & \textbf{87.3} & \textbf{88.8} \\
MAGIC H=3 w/o Gate & 77.5 & 91.8 & 83.8 & 84.4 \\
MAGIC H=1+Gate & 73.1 & 92.4 & 82.1 & 82.5 \\
MAPPO & 54.8 & 90.5 & 81.4 & 75.6 \\
\bottomrule
\end{tabular}

\end{table}

Panel A of Table~\ref{tab:component_analysis} shows that both design components contribute to performance on MPE. 
Replacing the multi-step action effect estimator with a one step version causes the largest drop, which indicates that one-step effects are insufficient for capturing delayed cooperative consequences. 
Removing the advantage gate also reduces performance, but the ungated multi-step variant still remains stronger than the one-step variant. 
This suggests that the multi-step action effect signal itself captures useful delayed coordination information, while the advantage gate further improves this signal by emphasizing effects observed in task-beneficial transitions.

Panel B of Table~\ref{tab:component_analysis} tests whether the same two design choices remain useful on harder SMAC maps. 
Using a one-step action effect signal with the gate improves over MAPPO on all three maps, but remains below the full method. 
Using the multi-step action effect signal without the gate also improves over MAPPO on all three maps, showing that delayed action effect estimation is useful by itself. 
Adding the advantage gate further improves performance on all three maps, indicating that the gate helps convert the delayed action effect signal into a training signal that is better aligned with the task. 
Together, these results suggest that the two components are complementary. 
Multi-step action effect estimation provides the delayed interaction signal, while advantage gating makes this signal more task-aligned and more effective across environments.

\subsection{Reliability and Horizon Sensitivity of Multi-Step Action Effect Estimation}
\label{subsec:score_reliability}

Since MAGIC estimates multi-step action effects through learned forward model rollouts, we evaluate the reliability of this estimation from three complementary views in Table~\ref{tab:reliability}. 
Panel A fixes the main horizon at $H=3$ and tests whether separability of branch effects tracks downstream performance as the forward model update ratio changes. 
Panel B studies the sensitivity to the horizon parameter $H$, which guides the choice of rollout depth. 
Panel C explicitly corrupts the forward model at $H=3$ to stress-test the estimator.

We diagnose the reliability of the learned rollouts from two sides: whether the forward model predicts future states accurately, and whether its predicted branch differences preserve the true strength of action effects. 
\emph{In-MSE} measures prediction error on normal policy rollouts, where all agents follow their learned policies. 
\emph{Int-MSE} measures prediction error on intervention rollouts, where the selected agent's action is replaced while the other agents follow their learned policies. 
These two errors indicate how accurate the learned dynamics are under factual and counterfactual rollout conditions. 
\emph{Sep. AUC} measures whether the model-predicted teammate-future differences reflect the true action effect strength. 
A value near 0.5 indicates random separation between large-effect and small-effect branches. 
A higher value means that the predicted branch differences better identify actions that truly cause larger changes in teammate futures.

\begin{table*}[t]
\centering
\caption{Reliability and horizon sensitivity of multi-step action effect estimation. Panel A fixes $H=3$ and varies the forward model update ratio on \texttt{5m\_vs\_6m}. Panel B reports horizon sensitivity averaged over three maps. Panel C reports explicit forward model corruption on \texttt{5m\_vs\_6m} with $H=3$.}
\label{tab:reliability}
\footnotesize
\renewcommand{\arraystretch}{1.05}

\begin{minipage}[t]{0.36\textwidth}
\centering
\textbf{Panel A. Fixed-horizon reliability}\\[-0.2em]
\resizebox{\linewidth}{!}{%
\begin{tabular}{ccccc}
\toprule
Update ratio & In-MSE & Int-MSE & Sep. AUC & Win\%\\
\midrule
$0.25\times$ & 0.135 & 0.228 & 0.57 & 86.8 \\
$0.50\times$ & 0.072 & 0.108 & 0.74 & 91.2 \\
$0.75\times$ & 0.038 & 0.051 & 0.87 & 94.3 \\
$1.00\times$ & 0.025 & 0.032 & 0.91 & \textbf{95.6} \\
\bottomrule
\end{tabular}%
}
\end{minipage}
\hfill
\begin{minipage}[t]{0.33\textwidth}
\centering
\textbf{Panel B. Horizon sensitivity}\\[-0.2em]
\resizebox{\linewidth}{!}{%
\begin{tabular}{ccccc}
\toprule
$H$ & In-MSE & Int-MSE & Sep. AUC & Win\%\\
\midrule
1  & 0.010 & 0.013 & 0.94 & 82.5 \\
2  & 0.018 & 0.024 & 0.92 & 86.1 \\
3  & 0.031 & 0.039 & 0.90 & \textbf{88.8} \\
5  & 0.065 & 0.088 & 0.82 & 86.3 \\
8  & 0.151 & 0.218 & 0.58 & 74.7 \\
10 & 0.228 & 0.345 & 0.49 & 68.2 \\
\bottomrule
\end{tabular}%
}
\end{minipage}
\hfill
\begin{minipage}[t]{0.25\textwidth}
\centering
\textbf{Panel C. Forward model corruption}\\[-0.2em]
\resizebox{\linewidth}{!}{%
\begin{tabular}{cccc}
\toprule
Noise & Int-MSE & Sep. AUC & Win\%\\
\midrule
0.0 & 0.032 & 0.91 & \textbf{95.6} \\
0.1 & 0.055 & 0.88 & 94.2 \\
0.5 & 0.120 & 0.79 & 91.5 \\
1.0 & 0.280 & 0.55 & 82.4 \\
\bottomrule
\end{tabular}%
}
\end{minipage}

\end{table*}

Panel A of Table~\ref{tab:reliability} fixes the main horizon at $H=3$ and varies only the forward model update ratio. 
As the update ratio increases from $0.25\times$ to $1.00\times$, both in-distribution and intervention rollout errors decrease, Sep. AUC increases from 0.57 to 0.91, and the win rate improves from 86.8 to 95.6. 
With the horizon fixed, this shows that separability of branch effects tracks downstream performance under the main setting.

Panel B of Table~\ref{tab:reliability} is a horizon sensitivity study.
Increasing the horizon from $H=1$ to $H=3$ improves performance while keeping rollout error in a reliable range. 
This indicates that one-step action effects miss useful delayed effects. 
Increasing the horizon further eventually reverses this trend. 
At $H=8$ and $H=10$, intervention error becomes much larger, Sep. AUC approaches random separation, and win rate drops. 
Thus the useful horizon is not simply the largest possible rollout depth. 
It is the range in which delayed interaction effects become visible before model error destroys separability of branch effects. The same rise-then-decline pattern also appears on MPE Predator Prey, with full results reported in Appendix~\ref{app:mpe-horizon}.

Panel C of Table~\ref{tab:reliability} tests robustness by explicitly corrupting the forward model at the main horizon $H=3$. 
Moderate corruption increases intervention MSE, but performance remains strong as long as Sep. AUC stays high. 
When corruption becomes large, Sep. AUC drops toward 0.5 and task performance falls accordingly. 
This supports the view that separability of branch effects is a useful reliability diagnostic for MAGIC, while raw MSE is an upstream diagnostic.

Additional appendix diagnostics examine this reliability criterion under counterfactual sampling, stochasticity, and artificial delay.
Appendix~\ref{app:k-ablation} varies the number of counterfactual branches on continuous MPE Predator Prey.
Appendix~\ref{app:smac-action-slip} and Appendix~\ref{app:stochasticity} test stochasticity in SMAC action execution and MPE dynamics.
Appendix~\ref{app:delay-results} studies artificial reward delays and shows that larger horizons help only while separability remains preserved.
Together, these results show that \textsc{MAGIC} is useful when learned rollouts preserve effect differences between branches, and that longer rollouts or more branch samples help only within this reliable regime.

\section{Conclusion}

We propose \textsc{MAGIC}, a multi-step advantage-gated action effect framework for cooperative MARL.
\textsc{MAGIC} estimates whether one agent's current action changes its teammates' future states over multiple rollout steps, and uses an extrinsic advantage gate to convert this signal into task-aligned intrinsic reward.
This separates action effect estimation from task-value alignment and reduces the risk of rewarding harmful high-effect behaviors.
Experiments across MPE, SMAC, and SMACv2 show that \textsc{MAGIC} achieves the best average performance among strong influence-based and task-aligned coordination baselines.
Ablations and diagnostics show that the gains depend on both multi-step estimation and advantage gating, and that the method remains reliable when learned rollouts preserve separability of branch effects.

\bibliographystyle{plainnat}
\bibliography{references}

\section*{Impact Statement}

This work develops a training-time method for improving coordination in cooperative MARL. 
The experiments are limited to simulated benchmark environments and the method is not deployed in real-world systems. 
As with general MARL methods, real-world use would require domain-specific safety evaluation, especially in settings involving autonomous coordination.

\appendix

\section{Additional Method Details}
\label{app:method-details}

This appendix provides the implementation details and theoretical properties referenced by Section~3. We keep the terminology consistent with the main text: MAGIC computes an agent-specific action effect score $c_i(t)$ from factual and counterfactual teammate futures, and then modulates this score with a team-level extrinsic advantage gate $\kappa(t)$. The forward model, counterfactual rollouts, action effect computation, and advantage gate are used only during training.

\subsection{Counterfactual Branch Construction}
\label{app:implementation_details}

For each sampled transition $(s_t,a_t,r_t^{\mathrm{ext}},s_{t+1})$, MAGIC constructs branch comparisons separately for each possible source agent $i$. The realized joint action is written as $a_t=(a_t^i,a_t^{-i})$, where $a_t^i$ is the action of the source agent and $a_t^{-i}$ denotes the actions of all other agents. The factual branch is
\begin{equation}
    b_t^{f,i}=(a_t^i,a_t^{-i},s_t).
\end{equation}
The $k$-th counterfactual branch is
\begin{equation}
    b_t^{k,i}=(a_t^{i,k},a_t^{-i},s_t), \qquad k=1,\ldots,K,
\end{equation}
where $a_t^{i,k}$ is a valid alternative action for agent $i$. This construction changes only the source action. The centralized state $s_t$ and the non-source actions $a_t^{-i}$ are kept fixed across the factual and counterfactual branches. Therefore the branch difference measures the effect of replacing the source agent's current action under the same decision context.

Counterfactual actions are sampled from the valid action set of the source agent. In discrete-action tasks, invalid actions are excluded using the environment action mask. If the executed action is drawn as a counterfactual candidate, we resample it when another valid action is available, so that each counterfactual branch represents an actual replacement. In continuous-control tasks, counterfactual actions are sampled within the bounded action range of the environment and clipped to the same action limits used by the policy. The number of counterfactual branches $K$ is fixed for a run. Unless otherwise stated, the main experiments use $K=64$. The counterfactual branches are not used to search for an optimal replacement action. They provide a Monte Carlo estimate of how much teammate futures change under valid replacements of the source action. Thus, increasing $K$ improves the coverage of alternative actions, but the estimator does not require dense optimization over the action space.

\subsection{Forward Model Training and Closed-Loop Rollouts}
\label{app:forward-rollout}

The forward model $f_{\phi}$ predicts the next centralized state from the current centralized state and joint action:
\begin{equation}
    \hat{s}_{t+1}=f_{\phi}(a_t,s_t).
\end{equation}
It is trained with real environment transitions and the one-step squared prediction loss
\begin{equation}
    \mathcal{L}_{\mathrm{fm}}(\phi)
    =
    \left\|f_{\phi}(a_t,s_t)-s_{t+1}\right\|_2^2.
\end{equation}
The forward model is not trained to predict the action effect score. It only provides predicted future centralized states for the factual and counterfactual branches.

Rollouts are closed-loop after the first action-replacement step. 
At the first model step, the factual branch uses $(a_t^i,a_t^{-i})$ and the $k$-th counterfactual branch uses $(a_t^{i,k},a_t^{-i})$. 
After this step, each branch evolves with the current decentralized policies applied to the model-predicted state. 
At each closed-loop rollout step, the predicted centralized state is converted into each agent's local observation using the same observation construction as the environment or the benchmark wrapper. 
For discrete-action tasks, the valid action mask from the corresponding predicted or current rollout state is used when available; otherwise invalid actions are excluded according to the benchmark action constraints. 
In SMAC and SMACv2, the predicted centralized state is decoded into unit-level features following the PyMARL2 state representation. We recompute local observations and action availability from these predicted unit features using the same visibility, attack-range, alive-state, and movement-boundary rules as the environment wrapper. Predicted continuous fields are clipped to valid ranges before observation and mask construction. If a unit is predicted as dead, only the no-op action is marked available.
When a policy is stochastic, rollout actions are sampled from the current policy during training. 
For diagnostic evaluation, we use the same deterministic evaluation convention as the corresponding backbone. 
This closed-loop procedure allows the first action replacement to influence later states through subsequent policy responses, rather than only measuring a one-step state perturbation.

For a source agent $i$, the factual rollout produces $\{\hat{s}^{f}_{t+h}\}_{h=1}^{H}$ and the $k$-th counterfactual rollout produces $\{\hat{s}^{k}_{t+h}\}_{h=1}^{H}$. All rollout states are full centralized states. Teammate-specific features are extracted after the rollout, rather than being predicted by separate teammate-specific models.

\subsection{Teammate Feature Extraction and Normalization}
\label{app:feature-normalization}

Let $z_j(\hat{s})$ denote the feature vector extracted for teammate $j$ from a predicted centralized state $\hat{s}$. The extractor uses the centralized state representation available during CTDE training and selects the components associated with teammate $j$. In MPE tasks, these components include task-relevant physical state variables such as position and velocity. In SMAC and SMACv2 tasks, they include the centralized unit features associated with the teammate unit, such as position, health-related state, alive status, and other unit-level state variables available to the centralized critic. The extractor does not use hidden information at execution time, because the entire MAGIC module is training-time only.

Before computing distances, each feature dimension is normalized using running statistics collected from training batches. For a feature vector $z$, we use
\begin{equation}
    \bar{z} = \frac{z-\mu_z}{\sigma_z+\epsilon},
\end{equation}
where $\mu_z$ and $\sigma_z$ are running mean and standard deviation estimates and $\epsilon$ is a small constant for numerical stability. The feature statistics are updated once per training update using batch exponential moving averages with momentum $0.99$, pooled over sampled transitions and agents in the training batch. The same normalization statistics are used for factual and counterfactual branches. The pairwise branch difference is then computed in this normalized feature space. In the main experiments we use the Euclidean distance,
\begin{equation}
    d_{j,h}^{(k)}=
    \left\|
    \bar{z}_j(\hat{s}^{f}_{t+h})-
    \bar{z}_j(\hat{s}^{k}_{t+h})
    \right\|_2 .
\end{equation}
This distance measures how much the predicted future state of teammate $j$ changes when only the source action is replaced.

\subsection{Action Effect Score Aggregation and Scaling}
\label{app:score-scaling}

For each source agent $i$, MAGIC first averages the pairwise differences over the $K$ counterfactual branches:
\begin{equation}
    d_{j,h}=\frac{1}{K}\sum_{k=1}^{K}d_{j,h}^{(k)}.
\end{equation}
It then averages over teammates and aggregates over rollout horizons:
\begin{equation}
    \tilde{c}_i(t)=
    \sum_{h=1}^{H}
    w_h
    \frac{1}{N-1}
    \sum_{j\neq i} d_{j,h},
    \qquad
    w_h\geq 0,
    \quad
    \sum_{h=1}^{H}w_h=1.
    \label{eq:action-effect-aggregation}
\end{equation}
In the main experiments we use uniform horizon weights unless otherwise stated. The quantity $\tilde{c}_i(t)$ is the unnormalized multi-step action effect magnitude. To keep the reward scale stable, we divide it by a running scale estimate and clip the result:

\begin{equation}
c_i(t)=\mathrm{clip}\left(
\frac{\tilde c_i(t)}{\sigma_c+\epsilon},0,c_{\max}
\right).
\label{eq:score-scaling}
\end{equation}
where $\sigma_c$ is a running standard deviation estimate of $\tilde{c}_i(t)$ over training batches. It is updated with the same batch exponential moving average rule used for teammate-feature normalization and is pooled over source agents and sampled transitions. This scaling preserves the non-negativity of the action effect score while preventing a small number of large branch differences from dominating the intrinsic reward.

\subsection{Advantage Gate, Advantage Normalization, and Reward Scaling}
\label{app:training_details}

The gate is based on an extrinsic team advantage. We maintain or reuse a centralized value estimate $V_{\omega}^{\mathrm{ext}}$ trained only with the original environment reward. In the MADDPG-based MPE experiments, $V_{\omega}^{\mathrm{ext}}$ is implemented as a separate centralized state-value network and is trained with one-step TD targets $r_t^{\mathrm{ext}}+\gamma V_{\bar{\omega}}^{\mathrm{ext}}(s_{t+1})$. In the MAPPO-based SMAC and SMACv2 experiments, we maintain an extrinsic-only value head with the same value architecture and update it from extrinsic-return targets. In both cases, this value estimate is never trained with the intrinsic reward or the total shaped reward. The one-step extrinsic TD advantage is
\begin{equation}
    A_{\mathrm{team}}^{\mathrm{ext}}(t)
    =
    r_t^{\mathrm{ext}}
    +
    \gamma V_{\omega}^{\mathrm{ext}}(s_{t+1})
    -
    V_{\omega}^{\mathrm{ext}}(s_t).
\end{equation}
This advantage does not include the intrinsic reward or the total shaped reward. Thus the gate is tied to the original task objective and does not use the shaped reward to decide its own strength.

In implementation, we normalize the scalar advantage before applying the gate:
\begin{equation}
    \bar{A}_{\mathrm{team}}^{\mathrm{ext}}(t)
    =
    \frac{A_{\mathrm{team}}^{\mathrm{ext}}(t)-\mu_A}{\sigma_A+\epsilon},
\end{equation}
where $\mu_A$ and $\sigma_A$ are running statistics over training batches. They are updated once per training update using batch exponential moving averages with momentum $0.99$, pooled over transitions in the training batch. The gate is then
\begin{equation}
    \kappa(t)=g\!\left(\bar{A}_{\mathrm{team}}^{\mathrm{ext}}(t)\right),
\end{equation}
where $g$ is a bounded monotone function. We use a sigmoid gate in the experiments. The normalization is part of the implementation of $g$ and does not change the definition in the main text.

The gate $\kappa(t)$ is shared by all agents at the same time step because it evaluates the task value of the realized team transition. The action effect score $c_i(t)$ remains agent-specific because it is computed from the branch comparisons of source agent $i$. The intrinsic reward is
\begin{equation}
    r_{i,t}^{\mathrm{int}}
    =
    \lambda_{\mathrm{int}}\kappa(t)c_i(t),
\end{equation}
where $\lambda_{\mathrm{int}}$ controls the strength of the intrinsic reward. The total reward used for the policy and critic update is
\begin{equation}
    r_{i,t}^{\mathrm{total}}
    =
    r_t^{\mathrm{ext}}+r_{i,t}^{\mathrm{int}}.
\end{equation}
The shaped reward is kept agent-specific. Each agent's policy and value target use its own $r_{i,t}^{\mathrm{total}}$, while the extrinsic gate is shared across agents. We do not average $r_{i,t}^{\mathrm{int}}$ across agents before constructing policy or value targets.
After $c_i(t)$ is clipped by Eq.~\eqref{eq:score-scaling}, we do not apply an additional intrinsic-reward clipping step in the main experiments.
The intrinsic reward scale is controlled by $c_{\max}$ and $\lambda_{\mathrm{int}}$.

\subsection{Full Training Procedure}
\label{app:training-algorithm}

Algorithm~\ref{alg:magic-training} summarizes the training procedure. The algorithm is written for a generic CTDE backbone. In off-policy experiments, sampled transitions come from a replay buffer. In on-policy MAPPO experiments, they come from the current rollout batch. In both cases, the intrinsic reward is computed from training samples without additional environment interaction.

\begin{algorithm}[H]
\caption{Training procedure of MAGIC}
\label{alg:magic-training}
\begin{algorithmic}[1]
\State Initialize CTDE policies, centralized critic or value functions, extrinsic value estimate $V_{\omega}^{\mathrm{ext}}$, forward model $f_{\phi}$, and training buffer or rollout storage.
\For{each training iteration}
    \State Collect environment transitions $(s_t,a_t,r_t^{\mathrm{ext}},s_{t+1})$ using the current decentralized policies.
    \State Update the forward model $f_{\phi}$ on real one-step transitions with $\mathcal{L}_{\mathrm{fm}}(\phi)$.
    \For{each sampled transition and each source agent $i$}
        \State Construct the factual branch $(a_t^i,a_t^{-i},s_t)$.
        \State Sample $K$ valid counterfactual source actions $\{a_t^{i,k}\}_{k=1}^{K}$ and construct $(a_t^{i,k},a_t^{-i},s_t)$.
        \State Roll out the factual and counterfactual branches with $f_{\phi}$ for $H$ steps using the closed-loop procedure.
        \State Extract normalized teammate features from the predicted centralized states.
        \State Compute $d_{j,h}^{(k)}$, aggregate over $k$, teammates, and horizons, and obtain $c_i(t)$ after scaling and clipping.
        \State Compute $A_{\mathrm{team}}^{\mathrm{ext}}(t)$ using the extrinsic value estimate and form $\kappa(t)$.
        \State Compute $r_{i,t}^{\mathrm{int}}=\lambda_{\mathrm{int}}\kappa(t)c_i(t)$ and $r_{i,t}^{\mathrm{total}}=r_t^{\mathrm{ext}}+r_{i,t}^{\mathrm{int}}$.
    \EndFor
    \State Update the CTDE backbone using $r_{i,t}^{\mathrm{total}}$ with the same actor-critic or policy-gradient update rule as the underlying backbone.
\EndFor
\end{algorithmic}
\end{algorithm}

\subsection{Execution-Time Behavior and Computational Cost}
\label{app:runtime}

MAGIC adds computation only during training. During execution, each agent uses the same decentralized policy as the underlying CTDE backbone. The forward model, counterfactual branches, action effect computation, extrinsic advantage gate, and intrinsic reward are not used at execution time.

The main additional training cost comes from model rollouts. For a batch with $B$ sampled transitions, $N$ agents, $K$ counterfactual branches per source agent, and horizon $H$, the rollout cost scales as $O(BNKH)$ forward model steps, up to constants from feature extraction and distance computation. The cost is controlled by using a moderate horizon and a fixed number of counterfactual branches. For very large agent populations, \textsc{MAGIC} can be applied with sparse source-agent or teammate subsets. Instead of computing the action effect score for every source agent in every sampled transition, one can sample a subset of source agents per batch. Similarly, the teammate aggregation in Eq.~\eqref{eq:action-effect-aggregation} can be restricted to a local neighborhood or communication graph, replacing the average over all $j\neq i$ with an average over a subset $\mathcal{N}(i)$. This changes the computation from all-pair action effect estimation to local action effect estimation while keeping the same factual and counterfactual branch construction. We do not use this approximation in the main experiments because the evaluated benchmarks have moderate agent counts, but it is a natural scaling option for swarm settings. In our measurements, the training overhead is about \(10\%\) relative to SCIC under matched settings, while execution-time cost is unchanged.

\subsection{Theoretical Properties of the Action Effect Module}
\label{app:theory}

This subsection gives the formal properties referenced in Section~\ref{sec:method}. The analysis is stated for the action effect estimator used in the main text. It is stated directly for the factual and counterfactual branch-difference estimator used by MAGIC. We first analyze the unnormalized score $\tilde c_i(t)$, then state what remains true after the bounded normalization and clipping map used to obtain $c_i(t)$, and finally analyze the extrinsic advantage gate.

\paragraph{Notation.}
For a fixed source agent $i$ and time $t$, let $\hat{s}_{t+h}^{f}$ and $\hat{s}_{t+h}^{k}$ denote the model-predicted factual and $k$-th counterfactual centralized states at horizon $h$. The empirical pairwise branch difference is
\begin{equation}
    \hat d_{j,h}^{(k)}
    =
    \operatorname{dist}\left(
    z_j(\hat{s}_{t+h}^{f}),
    z_j(\hat{s}_{t+h}^{k})
    \right),
\end{equation}
where $z_j(\cdot)$ extracts normalized teammate features and $\operatorname{dist}$ is the Euclidean distance in the normalized feature space in our implementation. The unnormalized empirical score is
\begin{equation}
    \hat{\tilde c}_i(t)
    =
    \sum_{h=1}^{H} w_h
    \frac{1}{N-1}
    \sum_{j\neq i}
    \frac{1}{K}
    \sum_{k=1}^{K}
    \hat d_{j,h}^{(k)},
    \qquad
    w_h\ge 0,
    \quad
    \sum_{h=1}^{H} w_h=1.
\end{equation}
When discussing approximation error, we use the same notation without hats, $s_{t+h}^{f}$, $s_{t+h}^{k}$, $d_{j,h}^{(k)}$, and $\tilde c_i^{\star}(t)$, for the corresponding quantities under the true environment dynamics and the same branch construction.

\begin{proposition}[Basic properties of the unnormalized score]
\label{prop:basic-score-properties}
Assume $w_h\ge 0$ and $\sum_{h=1}^{H}w_h=1$, and assume $\operatorname{dist}(x,y)\ge 0$ with equality if and only if $x=y$. Then the unnormalized empirical action effect score satisfies:
\begin{enumerate}[leftmargin=*]
    \item $\hat{\tilde c}_i(t)\ge 0$.
    \item $\hat{\tilde c}_i(t)=0$ if and only if every positive-weight factual and counterfactual teammate feature difference is zero, i.e., for all $h$ with $w_h>0$, all $j\neq i$, and all $k$, we have $z_j(\hat{s}_{t+h}^{f})=z_j(\hat{s}_{t+h}^{k})$.
    \item If there exists a horizon $h$ with $w_h>0$, a teammate $j\neq i$, and a counterfactual branch $k$ such that $z_j(\hat{s}_{t+h}^{f})\neq z_j(\hat{s}_{t+h}^{k})$, then $\hat{\tilde c}_i(t)>0$.
\end{enumerate}
\end{proposition}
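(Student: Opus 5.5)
The proof rests on one observation: $\hat{\tilde c}_i(t)$ is a finite sum of nonnegative terms with nonnegative coefficients. Expanding the definition gives
\begin{equation*}
\hat{\tilde c}_i(t)=\sum_{h=1}^{H}\sum_{j\neq i}\sum_{k=1}^{K}\alpha_{h}\,\hat d_{j,h}^{(k)},\qquad \alpha_h=\frac{w_h}{(N-1)K}\ge 0 .
\end{equation*}
Here $N\ge 2$ and $K\ge 1$ are implicitly assumed, so that the averages are well defined. Each coefficient $\alpha_h$ vanishes exactly when $w_h=0$. Each term $\hat d_{j,h}^{(k)}=\operatorname{dist}\bigl(z_j(\hat s^f_{t+h}),z_j(\hat s^k_{t+h})\bigr)$ is nonnegative by the assumption on $\operatorname{dist}$. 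Item 1 then follows at once, since a sum of products of nonnegative numbers is nonnegative.

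For item 2, we prove the two directions separately. Suppose first that $z_j(\hat s^f_{t+h})=z_j(\hat s^k_{t+h})$ for every $h$ with $w_h>0$, every $j\neq i$, and every $k$. Then every summand is zero: either $\alpha_h=0$, or $\alpha_h>0$ and $\hat d_{j,h}^{(k)}=0$ by the identity of indiscernibles for $\operatorname{dist}$. Hence the score is zero.

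Conversely, suppose the score is zero. A sum of nonnegative terms is zero only if every term is zero. So $\alpha_h\hat d_{j,h}^{(k)}=0$ for every index triple $(h,j,k)$. Whenever $w_h>0$ we have $\alpha_h>0$, which forces $\hat d_{j,h}^{(k)}=0$. The "equality iff $x=y$" hypothesis then gives $z_j(\hat s^f_{t+h})=z_j(\hat s^k_{t+h})$.

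Item 3 is the contrapositive of the forward implication in item 2, combined with item 1. If some positive-weight triple $(h,j,k)$ has differing features, the score is nonzero by item 2, and by item 1 it is therefore strictly positive. Alternatively, one can argue directly: that single summand $\alpha_h\hat d_{j,h}^{(k)}$ is strictly positive, and all the other summands are nonnegative.

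The argument has no genuine obstacle. The only care needed is bookkeeping. First, one must use the positive-weight restriction correctly: horizons with $w_h=0$ place no constraint on the features. Second, one must note that $\operatorname{dist}$ is a true metric on the normalized feature space, which is the Euclidean distance in this paper's implementation. Normalization by $(\sigma_z+\epsilon)$ is a fixed injective affine map applied identically to both branches, so equality of the normalized features is equivalent to equality of the raw features. The hypothesis on $\operatorname{dist}$ can therefore be applied directly to the $z_j$ as defined.
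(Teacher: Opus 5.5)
Your proof is correct and matches the paper's argument: the score is a nonnegatively weighted sum of nonnegative distances, which vanishes exactly when every positive-coefficient term vanishes, and identity of indiscernibles plus the contrapositive give items 2 and 3. Your closing remark about the $(\sigma_z+\epsilon)$ normalization is harmless but unnecessary, since $z_j$ already denotes the normalized features in the statement.
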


\begin{proof}
Each pairwise difference $\hat d_{j,h}^{(k)}$ is nonnegative by the definition of the distance. The score $\hat{\tilde c}_i(t)$ is a nonnegative weighted sum of these nonnegative terms, which proves the first claim. The sum can be zero only if every term with positive coefficient is zero. Since $w_h>0$, $1/(N-1)>0$, and $1/K>0$, this is equivalent to $\hat d_{j,h}^{(k)}=0$ for every positive-weight horizon, teammate, and counterfactual branch. By the identity-of-indiscernibles property of the distance, this holds if and only if the corresponding teammate features are identical. The third claim is the contrapositive of the zero-score characterization.
\end{proof}

\begin{corollary}[Reduction to one-step action effect estimation]
\label{cor:one-step-reduction}
If $H=1$ and $w_1=1$, then MAGIC reduces to a one-step action effect estimator:
\begin{equation}
    \hat{\tilde c}_i(t)
    =
    \frac{1}{N-1}
    \sum_{j\neq i}
    \frac{1}{K}
    \sum_{k=1}^{K}
    \operatorname{dist}\left(
    z_j(\hat{s}_{t+1}^{f}),
    z_j(\hat{s}_{t+1}^{k})
    \right).
\end{equation}
It only measures the effect of replacing the source action on teammate features at the next predicted step.
\end{corollary}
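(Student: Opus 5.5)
The plan is to substitute $H=1$ and $w_1=1$ directly into the definition of the unnormalized empirical score $\hat{\tilde c}_i(t)$ stated just before Proposition~\ref{prop:basic-score-properties}. With a single horizon, the outer sum $\sum_{h=1}^{H} w_h(\cdot)$ has exactly one term, and its coefficient is $w_1=1$. The normalization constraint $\sum_h w_h=1$ is then satisfied automatically. What remains is the average over teammates $j\neq i$ of the average over counterfactual branches $k$ of $\hat d_{j,1}^{(k)}$. Unfolding the definition of $\hat d_{j,1}^{(k)}$ gives the distance between $z_j(\hat s^{f}_{t+1})$ and $z_j(\hat s^{k}_{t+1})$, which is exactly the displayed formula.

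The second step justifies the interpretive claim that the estimator only measures the effect on teammate features at the next predicted step. The predicted states $\hat s^{f}_{t+1}$ and $\hat s^{k}_{t+1}$ are produced by a single application of $f_\phi$:
\begin{equation}
\hat s^{f}_{t+1}=f_\phi\bigl((a^i_t,a^{-i}_t),s_t\bigr),\qquad
\hat s^{k}_{t+1}=f_\phi\bigl((a^{i,k}_t,a^{-i}_t),s_t\bigr).
\end{equation}
This follows from the branch construction in Appendix~\ref{app:implementation_details} and the rollout description in Appendix~\ref{app:forward-rollout}. No closed-loop policy step enters at $h=1$, because policy responses are applied only after the first model step. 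Consequently, the score depends only on $s_t$, $a^{-i}_t$, the realized and replaced source actions, and one evaluation of the forward model. States at $h\ge 2$ never appear in the formula.

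As a supplementary remark, I would note that Proposition~\ref{prop:basic-score-properties} specializes immediately. With $w_1=1>0$, the score vanishes if and only if $z_j(\hat s^{f}_{t+1})=z_j(\hat s^{k}_{t+1})$ for all $j\neq i$ and all $k$. This makes precise the sense in which the one-step estimator is blind to effects that first appear at later horizons.

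I do not expect a genuine obstacle, since the argument is a direct substitution. The only care point is to state clearly that the phrase ``only measures'' refers to the dependence structure of the formula, namely that it involves only $t+1$ predicted states, rather than to a new analytic fact.
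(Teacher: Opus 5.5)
Your proposal is correct and matches the paper's proof, which simply substitutes $H=1$ and $w_1=1$ into the definition of $\hat{\tilde c}_i(t)$ so that only the $h=1$ term survives. Your extra remarks go beyond what the paper proves but are accurate elaborations of the interpretive sentence. These are that $\hat s^{f}_{t+1}$ and $\hat s^{k}_{t+1}$ come from a single forward-model step per branch ($K+1$ evaluations in total, not one) with no policy response, and the specialization of Proposition~\ref{prop:basic-score-properties}.
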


\begin{proof}
Substituting $H=1$ and $w_1=1$ into the definition of $\hat{\tilde c}_i(t)$ removes all horizons except $h=1$, yielding the stated expression.
\end{proof}

\begin{proposition}[One-step blindness to delayed action effects]
\label{prop:delayed-effects}
Consider a fixed decision context and source agent $i$. Suppose there exists an integer $d>1$ such that, for all teammates $j\neq i$ and all counterfactual branches $k$,
\begin{equation}
    z_j(s_{t+1}^{f})=z_j(s_{t+1}^{k}),
\end{equation}
but for some teammate $j^{\star}\neq i$ and some counterfactual branch $k^{\star}$,
\begin{equation}
    z_{j^{\star}}(s_{t+d}^{f})\neq z_{j^{\star}}(s_{t+d}^{k^{\star}}).
\end{equation}
Then the one-step action effect score is zero. In contrast, any multi-step score with $H\ge d$ and $w_d>0$ is strictly positive under the true dynamics.
\end{proposition}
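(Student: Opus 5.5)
The plan is to apply Proposition~\ref{prop:basic-score-properties} and Corollary~\ref{cor:one-step-reduction} to the true-dynamics quantities. These are the unhatted $s_{t+h}^{f}$, $s_{t+h}^{k}$, $d_{j,h}^{(k)}$ and $\tilde c_i^{\star}(t)$, with the weighting over $k$, $j$ and $h$ unchanged. Proposition~\ref{prop:basic-score-properties} only uses nonnegativity of the weights and the identity-of-indiscernibles property of $\operatorname{dist}$. It does not use the fact that the states are model predictions, so its three claims hold verbatim for $\tilde c_i^{\star}(t)$. I will state this transfer explicitly at the outset, since the proposition as written is phrased for the hatted score.

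For the first claim, take $H=1$ and $w_1=1$. By Corollary~\ref{cor:one-step-reduction}, the one-step score is $\frac{1}{N-1}\sum_{j\neq i}\frac{1}{K}\sum_{k}\operatorname{dist}(z_j(s_{t+1}^{f}),z_j(s_{t+1}^{k}))$. By hypothesis every argument pair is equal, so each distance is zero and the score is zero. This is exactly the reverse direction of claim 2 of Proposition~\ref{prop:basic-score-properties}.

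For the second claim, fix any horizon $H\ge d$ with weights satisfying $w_d>0$. Then the index $h=d$ appears in the sum with positive weight. The hypothesis supplies $j^{\star}\neq i$ and $k^{\star}$ with $z_{j^{\star}}(s_{t+d}^{f})\neq z_{j^{\star}}(s_{t+d}^{k^{\star}})$, which gives $d_{j^{\star},d}^{(k^{\star})}>0$. Claim 3 of Proposition~\ref{prop:basic-score-properties} then gives strict positivity. Directly, the score is at least $w_d\frac{1}{N-1}\frac{1}{K}d_{j^{\star},d}^{(k^{\star})}>0$, because all other terms are nonnegative.

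The only point needing care is consistency of the branch construction under the true dynamics. The trajectories $s_{t+h}^{f}$ and $s_{t+h}^{k}$ must be well defined as closed-loop rollouts from the same $(s_t,a_t^{-i})$. If the policies are stochastic, the statement should be read pathwise for a fixed realization of the rollout randomness. I will state this convention before applying the proposition. Beyond that, no genuine obstacle is expected.
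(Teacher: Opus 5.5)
Your proposal is correct and follows essentially the same route as the paper. The paper also obtains the zero one-step score from Corollary~\ref{cor:one-step-reduction} with all $t+1$ differences vanishing, and gets strict positivity for $H\ge d$, $w_d>0$ from Proposition~\ref{prop:basic-score-properties} applied to the nonzero $(j^{\star},d,k^{\star})$ term; your explicit transfer of that proposition from the hatted to the true-dynamics score, and your pathwise reading of the stochastic case, are conventions the paper leaves implicit, so they tighten the argument without changing it.
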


\begin{proof}
For $H=1$, all teammate feature differences at $t+1$ are zero by assumption, so Corollary~\ref{cor:one-step-reduction} and Proposition~\ref{prop:basic-score-properties} imply that the one-step score is zero. If $H\ge d$ and $w_d>0$, then the difference at horizon $d$ for teammate $j^{\star}$ and branch $k^{\star}$ is nonzero. Proposition~\ref{prop:basic-score-properties} then implies that the multi-step score is strictly positive.
\end{proof}

\paragraph{Forward model approximation error.}
The preceding properties describe the branch differences produced by a given rollout process. We now relate the learned-model score to the score that would be obtained under the true dynamics with the same factual and counterfactual branch construction. This statement explains why rollout errors are relevant to the reliability diagnostics in Table~\ref{tab:reliability}.
For the error-bound analysis, we consider deterministic environment dynamics or branch comparisons coupled with the same exogenous randomness. Equivalently, the bounds can be read conditionally on the sampled exogenous noise.

\begin{proposition}[Forward model error bound for action effect scores]
\label{prop:forward-error-bound}
Assume each teammate feature extractor $z_j$ is $L_z$-Lipschitz with respect to the centralized state norm, i.e.,
\begin{equation}
    \|z_j(s)-z_j(s')\|_2 \le L_z\|s-s'\|_2
    \quad \text{for all } j.
\end{equation}
Let the model rollout errors at horizon $h$ be
\begin{equation}
    \epsilon_h^{f}=\|\hat{s}_{t+h}^{f}-s_{t+h}^{f}\|_2,
    \qquad
    \epsilon_h^{k}=\|\hat{s}_{t+h}^{k}-s_{t+h}^{k}\|_2.
\end{equation}
Then each pairwise branch-difference error is bounded by
\begin{equation}
    \left|\hat d_{j,h}^{(k)}-d_{j,h}^{(k)}\right|
    \le
    L_z\left(\epsilon_h^{f}+\epsilon_h^{k}\right),
\end{equation}
and the aggregated score error satisfies
\begin{equation}
\label{eq:score-error-bound}
    \left|\hat{\tilde c}_i(t)-\tilde c_i^{\star}(t)\right|
    \le
    L_z
    \sum_{h=1}^{H} w_h
    \frac{1}{K}
    \sum_{k=1}^{K}
    \left(\epsilon_h^{f}+\epsilon_h^{k}\right).
\end{equation}
\end{proposition}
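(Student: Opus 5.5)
The argument is a direct consequence of the reverse triangle inequality for the Euclidean norm, combined with the Lipschitz assumption on the extractors $z_j$, followed by a linearity step for the aggregation. We work in the normalized feature space where $\operatorname{dist}(x,y)=\|x-y\|_2$. We also use the convention stated before the proposition: the learned-model branches $\hat s^{f}_{t+h},\hat s^{k}_{t+h}$ and the true-dynamics branches $s^{f}_{t+h},s^{k}_{t+h}$ are built from the same branch construction, so they are compared horizon by horizon with the same $k$.

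\textbf{Step 1 (pairwise bound).} Fix $j,h,k$ and write $\hat x=z_j(\hat s^{f}_{t+h})$, $\hat y=z_j(\hat s^{k}_{t+h})$, $x=z_j(s^{f}_{t+h})$, $y=z_j(s^{k}_{t+h})$. The reverse triangle inequality gives
\begin{equation*}
\bigl|\,\|\hat x-\hat y\|_2-\|x-y\|_2\,\bigr|\le \|(\hat x-\hat y)-(x-y)\|_2\le \|\hat x-x\|_2+\|\hat y-y\|_2 .
\end{equation*}
Applying the $L_z$-Lipschitz property of $z_j$ to each term yields $\|\hat x-x\|_2\le L_z\epsilon^{f}_h$ and $\|\hat y-y\|_2\le L_z\epsilon^{k}_h$. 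This proves the pairwise bound $|\hat d^{(k)}_{j,h}-d^{(k)}_{j,h}|\le L_z(\epsilon_h^f+\epsilon_h^k)$.

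\textbf{Step 2 (aggregation).} Both $\hat{\tilde c}_i(t)$ and $\tilde c_i^{\star}(t)$ are the same linear combination of pairwise differences, with nonnegative coefficients $w_h\cdot\frac{1}{N-1}\cdot\frac1K$. Their difference is therefore that combination applied to the differences $\hat d^{(k)}_{j,h}-d^{(k)}_{j,h}$. Moving the absolute value inside the sums by the triangle inequality, which is valid because the coefficients are nonnegative, and inserting the Step 1 bound gives
\begin{equation*}
|\hat{\tilde c}_i(t)-\tilde c_i^{\star}(t)|\le \sum_{h=1}^H w_h\frac{1}{N-1}\sum_{j\neq i}\frac1K\sum_{k=1}^K L_z(\epsilon_h^f+\epsilon_h^k).
\end{equation*}
The summand does not depend on $j$, because the errors $\epsilon$ are measured on the full centralized state. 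The average over the $N-1$ teammates therefore collapses to a single copy, which gives exactly \eqref{eq:score-error-bound}.

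\textbf{Main subtlety.} There is no real technical obstacle; the only point requiring care is well-posedness. The true-dynamics quantities $s^{f}_{t+h},s^{k}_{t+h}$ must be defined with the same counterfactual actions $a_t^{i,k}$. Where stochastic policy or environment randomness enters, the comparison must be coupled, which is exactly the deterministic or common-exogenous-noise convention fixed before the proposition. Under that convention the errors $\epsilon_h$ are well-defined random variables, and the bound holds pathwise, that is, conditionally on the noise.

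The errors $\epsilon_h$ themselves may compound over closed-loop rollouts. This does not affect the proof, since the statement bounds the score error in terms of the $\epsilon_h$ directly and makes no claim about how they grow with $h$.
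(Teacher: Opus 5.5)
Your proof is correct and matches the paper's argument step for step: the reverse triangle inequality and the $L_z$-Lipschitz property give the pairwise bound, and substituting into the nonnegative weighted average yields the score bound, with the teammate average collapsing because the bound does not depend on $j$. Your remark about coupling the true and model branches through shared exogenous noise matches the convention the paper fixes just before the proposition.
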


\begin{proof}
For Euclidean distance, the reverse triangle inequality gives
\begin{align}
    \left|\hat d_{j,h}^{(k)}-d_{j,h}^{(k)}\right|
    &\le
    \left\|
    z_j(\hat{s}_{t+h}^{f})-z_j(s_{t+h}^{f})
    \right\|_2
    +
    \left\|
    z_j(\hat{s}_{t+h}^{k})-z_j(s_{t+h}^{k})
    \right\|_2 .
\end{align}
The Lipschitz property of $z_j$ then gives
\begin{equation}
    \left|\hat d_{j,h}^{(k)}-d_{j,h}^{(k)}\right|
    \le L_z\epsilon_h^{f}+L_z\epsilon_h^{k}.
\end{equation}
Substituting this bound into the weighted average defining the score gives
\begin{align}
    \left|\hat{\tilde c}_i(t)-\tilde c_i^{\star}(t)\right|
    &\le
    \sum_{h=1}^{H}w_h
    \frac{1}{N-1}\sum_{j\neq i}
    \frac{1}{K}\sum_{k=1}^{K}
    \left|\hat d_{j,h}^{(k)}-d_{j,h}^{(k)}\right| \\
    &\le
    L_z
    \sum_{h=1}^{H} w_h
    \frac{1}{K}
    \sum_{k=1}^{K}
    \left(\epsilon_h^{f}+\epsilon_h^{k}\right),
\end{align}
because the bound does not depend on $j$. This proves the claim.
\end{proof}

\subsection{Normalization, Clipping, and Reward-Level Properties}
\label{app:normalization-theory}

The structural properties above are stated for the unnormalized score $\tilde c_i(t)$ or its learned-model estimate $\hat{\tilde c}_i(t)$. In implementation, MAGIC uses a bounded transformation to obtain the reward-level score $c_i(t)$. In the main implementation this transformation is
\begin{equation}
\label{eq:psi-definition}
    c_i(t)=\psi(\hat{\tilde c}_i(t)),
    \qquad
    \psi(x)=\operatorname{clip}\left(\frac{x}{\sigma_c+\epsilon},0,c_{\max}\right),
\end{equation}
where $\sigma_c$ is a running scale estimate, $\epsilon>0$, and $c_{\max}$ is the clipping threshold. This transformation is used for numerical stability. It is not part of the definition of the underlying factual and counterfactual action effect.

\begin{proposition}[Properties preserved by normalization and clipping]
\label{prop:normalization-clipping}
Let $\psi$ be defined as in Equation~\eqref{eq:psi-definition}. Then:
\begin{enumerate}[leftmargin=*]
    \item $0\le c_i(t)\le c_{\max}$ for all $i,t$.
    \item $\psi$ is monotone nondecreasing. Therefore, if $x_1\le x_2$, then $\psi(x_1)\le\psi(x_2)$. Clipping may make two large scores equal after saturation, but it does not reverse their order.
    \item $\psi$ is globally Lipschitz with constant $L_{\psi}=1/(\sigma_c+\epsilon)$. Hence
    \begin{equation}
        |\psi(x)-\psi(y)|\le L_{\psi}|x-y|.
    \end{equation}
\end{enumerate}
\end{proposition}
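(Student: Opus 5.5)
The plan is to write $\psi$ as a composition $\psi=\operatorname{clip}_{[0,c_{\max}]}\circ\ell$, where $\ell(x)=x/(\sigma_c+\epsilon)$ and $\operatorname{clip}_{[0,c_{\max}]}(y)=\min\{\max\{y,0\},c_{\max}\}$. Each claim then follows from the corresponding property of the two factors. Throughout, $\sigma_c$ is treated as a fixed nonnegative scalar when $\psi$ is applied: it is a running standard-deviation estimate, so $\sigma_c\ge 0$, and together with $\epsilon>0$ this gives $\sigma_c+\epsilon>0$. The statement is about $\psi$ at a fixed normalization state, so the dependence of $\sigma_c$ on time plays no role.

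\textbf{Claim 1.} The clip map takes values in $[0,c_{\max}]$ for every real input, since $\max\{y,0\}\ge 0$ and taking the minimum with $c_{\max}\ge 0$ keeps the value in that interval. Hence $0\le c_i(t)\le c_{\max}$ regardless of the argument. By Proposition~\ref{prop:basic-score-properties}, the argument $\hat{\tilde c}_i(t)$ is already nonnegative, so the lower clip is never active, but the claim does not depend on this.

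\textbf{Claim 2.} The map $\ell$ is increasing because its slope $1/(\sigma_c+\epsilon)$ is positive. The clip map is nondecreasing because $\max\{\cdot,0\}$ and $\min\{\cdot,c_{\max}\}$ are each nondecreasing, and compositions of nondecreasing maps are nondecreasing. So $x_1\le x_2$ implies $\psi(x_1)\le\psi(x_2)$. For the saturation remark, any two inputs with $\ell(x)\ge c_{\max}$ map to $c_{\max}$; this gives equality but, by monotonicity, never a strict reversal of order.

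\textbf{Claim 3.} The clip map is $1$-Lipschitz. I would show this either as the metric projection onto the convex interval $[0,c_{\max}]$, or directly by checking that $\max\{\cdot,0\}$ and $\min\{\cdot,c_{\max}\}$ are each $1$-Lipschitz. The map $\ell$ is Lipschitz with constant $1/(\sigma_c+\epsilon)$. Lipschitz constants multiply under composition, which gives $|\psi(x)-\psi(y)|\le|x-y|/(\sigma_c+\epsilon)$.

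There is no real obstacle. The only point to state explicitly is the positivity $\sigma_c+\epsilon>0$ (and $c_{\max}\ge 0$), since all three claims rely on it. This Lipschitz bound can be combined with Proposition~\ref{prop:forward-error-bound} to transfer the forward-model error bound to the clipped score.
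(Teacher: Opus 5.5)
Your proposal is correct and follows the paper's proof: both write $\psi$ as the positive rescaling $x\mapsto x/(\sigma_c+\epsilon)$ followed by clipping to $[0,c_{\max}]$. Both then read off boundedness from the clip, monotonicity from composing nondecreasing maps, and the Lipschitz constant from multiplying the constants $1/(\sigma_c+\epsilon)$ and $1$. Your explicit note that $\sigma_c\ge 0$, $\epsilon>0$, and $c_{\max}\ge 0$ states the assumptions the paper's proof uses only implicitly.
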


\begin{proof}
The first claim follows directly from clipping to $[0,c_{\max}]$. The map $x\mapsto x/(\sigma_c+\epsilon)$ is monotone increasing because $\sigma_c+\epsilon>0$, and clipping to an interval is monotone nondecreasing, so their composition is monotone nondecreasing. The clipping map is 1-Lipschitz, and the scaling map has Lipschitz constant $1/(\sigma_c+\epsilon)$; the composition is therefore $1/(\sigma_c+\epsilon)$-Lipschitz.
\end{proof}

\begin{corollary}[Score-error propagation after scaling]
\label{cor:scaled-score-error}
Under the assumptions of Proposition~\ref{prop:forward-error-bound}, the normalized and clipped score error satisfies
\begin{equation}
    \left|\psi(\hat{\tilde c}_i(t))-\psi(\tilde c_i^{\star}(t))\right|
    \le
    L_{\psi}
    L_z
    \sum_{h=1}^{H} w_h
    \frac{1}{K}
    \sum_{k=1}^{K}
    \left(\epsilon_h^{f}+\epsilon_h^{k}\right).
\end{equation}
\end{corollary}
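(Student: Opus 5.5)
The plan is to chain the two earlier results, Proposition~\ref{prop:normalization-clipping} and Proposition~\ref{prop:forward-error-bound}. Nothing new is needed beyond composing a Lipschitz map with an existing error bound.

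First, I fix the running scale estimate $\sigma_c$ and the threshold $c_{\max}$ at the values used at time $t$. Both the learned-model score $\hat{\tilde c}_i(t)$ and the true-dynamics score $\tilde c_i^{\star}(t)$ are then passed through the same map $\psi$ of Equation~\eqref{eq:psi-definition}.

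This is the one point I would check carefully. If the true-dynamics score were normalized with a different running statistic, $\psi$ would not be a single fixed function, and the Lipschitz argument would fail. So I would state explicitly that the comparison uses a common $\psi$. This matches the setup of Proposition~\ref{prop:normalization-clipping}, where $\sigma_c$ enters only as a fixed positive constant.

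Second, I apply item~3 of Proposition~\ref{prop:normalization-clipping} with $x=\hat{\tilde c}_i(t)$ and $y=\tilde c_i^{\star}(t)$. This gives
\begin{equation*}
\left|\psi(\hat{\tilde c}_i(t))-\psi(\tilde c_i^{\star}(t))\right|\le L_{\psi}\left|\hat{\tilde c}_i(t)-\tilde c_i^{\star}(t)\right|,
\end{equation*}
with $L_{\psi}=1/(\sigma_c+\epsilon)$. Next, I bound the right-hand factor using Equation~\eqref{eq:score-error-bound} from Proposition~\ref{prop:forward-error-bound}; its hypotheses (Lipschitz $z_j$ and deterministic or noise-coupled dynamics) are exactly those assumed in the corollary. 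Multiplying that bound by $L_{\psi}>0$ preserves the inequality and yields the stated estimate.

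There is essentially no obstacle here beyond the bookkeeping point about the shared normalization constant. It would also be worth remarking that clipping can only help: the 1-Lipschitz clip never enlarges differences, so in the saturated regime the bound is conservative.
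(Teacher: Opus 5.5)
Your proposal is correct and matches the paper's proof, which likewise composes the $L_{\psi}$-Lipschitz bound from Proposition~\ref{prop:normalization-clipping} with the raw score-error bound \eqref{eq:score-error-bound} from Proposition~\ref{prop:forward-error-bound}. Your explicit requirement that both scores pass through the same $\psi$, with a common $\sigma_c>0$ at time $t$, is a useful clarification that the paper leaves implicit.
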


\begin{proof}
Apply the Lipschitz bound in Proposition~\ref{prop:normalization-clipping} to the raw score error bound in Proposition~\ref{prop:forward-error-bound}.
\end{proof}

\begin{remark}[Scope of the zero-effect statement]
\label{rem:zero-effect-scope}
The exact zero-effect characterization in Proposition~\ref{prop:basic-score-properties} is a structural statement about the unnormalized branch-difference score. The implemented score $c_i(t)=\psi(\hat{\tilde c}_i(t))$ remains nonnegative and bounded. Because $\psi(0)=0$, identical factual and counterfactual teammate futures still produce zero intrinsic action effect score. However, due to clipping, very large but different raw scores can map to the same saturated value. We therefore use the structural properties to characterize the underlying estimator, and use Proposition~\ref{prop:normalization-clipping} and Corollary~\ref{cor:scaled-score-error} to characterize the normalized reward-level score.
\end{remark}

\subsection{Properties of the Extrinsic Advantage Gate}
\label{app:gate-theory}

The gate uses the extrinsic team advantage defined in the main text. Let
\begin{equation}
    u(t)=\frac{A_{\mathrm{team}}^{\mathrm{ext}}(t)-\mu_A}{\sigma_A+\epsilon}
\end{equation}
be the normalized gate input used in implementation, and let
\begin{equation}
    \kappa(t)=g(u(t)),
    \qquad
    0\le g(u)\le 1,
\end{equation}
where $g$ is monotone nondecreasing. For fixed running statistics, $u(t)$ is monotone in the raw extrinsic team advantage, so larger extrinsic team advantage gives a larger gate. The same $\kappa(t)$ is shared by all agents at time $t$, while the action effect score $c_i(t)$ remains source-agent specific.

\begin{proposition}[Bounded task-alignment multiplier]
\label{prop:gate-bounds}
Assume $\lambda_{\mathrm{int}}\ge0$, $0\le \kappa(t)\le 1$, and $0\le c_i(t)\le c_{\max}$. Then the intrinsic reward
\begin{equation}
    r_{i,t}^{\mathrm{int}}=\lambda_{\mathrm{int}}\kappa(t)c_i(t)
\end{equation}
satisfies
\begin{equation}
    0\le r_{i,t}^{\mathrm{int}}\le \lambda_{\mathrm{int}}c_{\max}.
\end{equation}
For a fixed $c_i(t)$, $r_{i,t}^{\mathrm{int}}$ is monotone nondecreasing in the gate input $u(t)$, and therefore monotone nondecreasing in $A_{\mathrm{team}}^{\mathrm{ext}}(t)$ for fixed running statistics.
\end{proposition}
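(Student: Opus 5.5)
The argument is elementwise and splits into the two claims of Proposition~\ref{prop:gate-bounds}: the two-sided bound, then the monotonicity.

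\emph{Bound.} All three factors are nonnegative: $\lambda_{\mathrm{int}}\ge0$, $\kappa(t)\ge0$, and $c_i(t)\ge0$, the last by Proposition~\ref{prop:normalization-clipping}(1). Hence their product $r_{i,t}^{\mathrm{int}}$ is nonnegative. For the upper bound I will multiply the inequalities $\kappa(t)\le1$ and $c_i(t)\le c_{\max}$. This step is legitimate because both sides are nonnegative, giving $\kappa(t)c_i(t)\le c_{\max}$. Multiplying by $\lambda_{\mathrm{int}}\ge0$ preserves the inequality and yields $r_{i,t}^{\mathrm{int}}\le\lambda_{\mathrm{int}}c_{\max}$.

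\emph{Monotonicity.} Fix $c_i(t)$ and view $r_{i,t}^{\mathrm{int}}$ as the map $u\mapsto\lambda_{\mathrm{int}}c_i(t)\,g(u)$. Since $g$ is monotone nondecreasing, for $u_1\le u_2$ we have $g(u_1)\le g(u_2)$. Multiplying by the nonnegative constant $\lambda_{\mathrm{int}}c_i(t)$ preserves this order, so the reward is nondecreasing in $u$. For the dependence on the raw advantage, note that with $\mu_A,\sigma_A$ fixed, $u$ is an affine function of $A_{\mathrm{team}}^{\mathrm{ext}}(t)$ with slope $1/(\sigma_A+\epsilon)$. This slope is positive because $\sigma_A\ge0$ (it is a standard deviation estimate) and $\epsilon>0$. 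So $u$ is increasing in the advantage, and a composition of nondecreasing maps is nondecreasing, which gives monotonicity in $A_{\mathrm{team}}^{\mathrm{ext}}(t)$.

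There is no real obstacle here. The only points needing care are to state explicitly that nonnegativity is what allows multiplying the inequalities in the bound, and to confirm that $\sigma_A+\epsilon>0$ so the normalization does not flip the order.
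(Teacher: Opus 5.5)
Your proposal is correct and follows the paper's proof essentially step for step. You multiply the nonnegative bounds to get $0\le r_{i,t}^{\mathrm{int}}\le\lambda_{\mathrm{int}}c_{\max}$, then use monotonicity of $g$ scaled by the nonnegative constant $\lambda_{\mathrm{int}}c_i(t)$ and composed with the positive affine map $A_{\mathrm{team}}^{\mathrm{ext}}(t)\mapsto u(t)$. Your appeal to Proposition~\ref{prop:normalization-clipping} for $c_i(t)\ge 0$ is harmless but redundant, since that nonnegativity is already an explicit hypothesis of the statement.
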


\begin{proof}
The bound follows immediately from multiplying the inequalities $0\le \kappa(t)\le 1$ and $0\le c_i(t)\le c_{\max}$ by $\lambda_{\mathrm{int}}\ge0$. If $c_i(t)$ is fixed, then $r_{i,t}^{\mathrm{int}}$ is an affine nonnegative multiple of $\kappa(t)$. Since $\kappa(t)=g(u(t))$ and $g$ is monotone nondecreasing, the intrinsic reward is monotone nondecreasing in $u(t)$. Because $u(t)$ is a positive affine transformation of $A_{\mathrm{team}}^{\mathrm{ext}}(t)$ for fixed running statistics, the same monotonicity holds with respect to the extrinsic team advantage.
\end{proof}

\begin{proposition}[Controlled contribution from low-gate transitions]
Let $\mathcal{L}$ be any set of transitions whose normalized gate input satisfies $u(t) \le \rho$.
Define $\epsilon_\rho = \sup_{u \le \rho} g(u)$. Then for any transition in $\mathcal{L}$,
\[
r^{\mathrm{int}}_{i,t} \le \lambda_{\mathrm{int}}\epsilon_\rho c_{\max}.
\]
In particular, transitions assigned a small gate can contribute only a controlled amount of intrinsic
reward even when their action effect score is large.
\end{proposition}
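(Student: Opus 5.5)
The plan is to reduce the claim to the factorization $r^{\mathrm{int}}_{i,t}=\lambda_{\mathrm{int}}\kappa(t)c_i(t)$ and bound each nonnegative factor separately, exactly as in the proof of Proposition~\ref{prop:gate-bounds}. Fix any transition $t\in\mathcal{L}$ and any source agent $i$. By membership in $\mathcal{L}$ we have $u(t)\le\rho$. By definition of the supremum, $\kappa(t)=g(u(t))\le\sup_{u\le\rho}g(u)=\epsilon_\rho$. Since $g$ is monotone nondecreasing, this supremum in fact equals $g(\rho)$, but we will not need that. Also $\epsilon_\rho\le 1$ because $0\le g\le 1$, so $\epsilon_\rho$ is well defined and finite.

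Next, we bound the score factor. Proposition~\ref{prop:normalization-clipping}(1) gives $0\le c_i(t)\le c_{\max}$, and $\kappa(t)\ge 0$ holds by the gate's range. Since all three factors $\lambda_{\mathrm{int}}\ge 0$, $\kappa(t)\ge 0$ and $c_i(t)\ge 0$ are nonnegative, we can multiply the two upper bounds:
\[
r^{\mathrm{int}}_{i,t}=\lambda_{\mathrm{int}}\kappa(t)c_i(t)\le\lambda_{\mathrm{int}}\epsilon_\rho c_i(t)\le\lambda_{\mathrm{int}}\epsilon_\rho c_{\max}.
\]
The bound does not depend on $c_i(t)$ beyond its clip ceiling, so it holds even when the action effect score is large. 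This gives the ``in particular'' clause.

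There is no substantive obstacle here. The only care needed is to make the standing assumptions explicit, namely $\lambda_{\mathrm{int}}\ge 0$ and the nonnegativity of $\kappa$ and $c_i$. These ensure that multiplying the inequalities preserves their direction. If one wanted a bound on the total contribution over $\mathcal{L}$, one would sum this pointwise bound, giving $|\mathcal{L}|\lambda_{\mathrm{int}}\epsilon_\rho c_{\max}$, or a discounted analogue.
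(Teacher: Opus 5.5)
Your proof is correct and follows the paper's argument: $\kappa(t)=g(u(t))\le\epsilon_\rho$ by the definition of the supremum, and combining this with $0\le c_i(t)\le c_{\max}$ and $\lambda_{\mathrm{int}}\ge 0$ gives the bound. The only difference is that you state explicitly the nonnegativity of each factor, which the paper leaves implicit; that care is warranted, since it is what keeps the inequalities in the right direction when they are multiplied.
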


\begin{proof}
For any transition in $\mathcal{L}$, we have $u(t)\le \rho$. By the definition of
$\epsilon_\rho$, $\kappa(t)=g(u(t))\le \epsilon_\rho$. Combining this with
$c_i(t)\le c_{\max}$ gives the bound.
\end{proof}

\begin{proposition}[Second-moment control by gating]
\label{prop:second-moment}
Let $R_i^{\mathrm{ungated}}(t)=\lambda_{\mathrm{int}}c_i(t)$ and $R_i^{\mathrm{gated}}(t)=\lambda_{\mathrm{int}}\kappa(t)c_i(t)$. If $0\le \kappa(t)\le1$, then
\begin{equation}
    \mathbb{E}\left[(R_i^{\mathrm{gated}}(t))^2\right]
    \le
    \mathbb{E}\left[(R_i^{\mathrm{ungated}}(t))^2\right].
\end{equation}
\end{proposition}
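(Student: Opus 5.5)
The plan is a pointwise argument followed by monotonicity of expectation. First I fix a transition $t$ and a source agent $i$ and compare the two squared rewards directly. Since $R_i^{\mathrm{gated}}(t)=\kappa(t)\,R_i^{\mathrm{ungated}}(t)$, squaring gives
\begin{equation}
    \bigl(R_i^{\mathrm{gated}}(t)\bigr)^2
    =
    \kappa(t)^2\,\bigl(R_i^{\mathrm{ungated}}(t)\bigr)^2 .
\end{equation}
The assumption $0\le\kappa(t)\le 1$ gives $0\le\kappa(t)^2\le 1$. Because $(R_i^{\mathrm{ungated}}(t))^2\ge 0$, multiplying by a factor in $[0,1]$ cannot increase it. Hence the pointwise inequality $(R_i^{\mathrm{gated}}(t))^2\le (R_i^{\mathrm{ungated}}(t))^2$ holds for every realization of the random quantities (state, actions, sampled counterfactuals, running statistics).

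Note that this step does not need the sign of $c_i(t)$ or of $\lambda_{\mathrm{int}}$, since only squares appear. I would nonetheless remark that, under the paper's conventions ($c_i(t)\ge 0$ by Proposition~\ref{prop:normalization-clipping}, and $\lambda_{\mathrm{int}}\ge 0$), both rewards are nonnegative. In that case the pointwise inequality also holds before squaring.

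Second, I take expectations over whatever distribution defines $\mathbb{E}$, for example the sampling distribution of transitions under training. Monotonicity of expectation for nonnegative random variables turns the pointwise bound into the claimed inequality.

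The only point requiring care, and the main obstacle if any, is well-definedness of the expectations. I handle it by noting that $0\le c_i(t)\le c_{\max}$ by Proposition~\ref{prop:normalization-clipping}. Then $(R_i^{\mathrm{ungated}}(t))^2\le \lambda_{\mathrm{int}}^2c_{\max}^2$ is bounded, so both second moments are finite. Even without this bound, the inequality between expectations of nonnegative variables holds in the extended sense, so the argument goes through either way.

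As an optional strengthening, the same computation gives the exact identity
\begin{equation}
    \mathbb{E}\bigl[(R_i^{\mathrm{ungated}})^2\bigr]-\mathbb{E}\bigl[(R_i^{\mathrm{gated}})^2\bigr]
    =
    \mathbb{E}\bigl[(1-\kappa^2)(R_i^{\mathrm{ungated}})^2\bigr]\ge 0 .
\end{equation}
This identity quantifies the reduction in second moment produced by gating.
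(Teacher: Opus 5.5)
Your proof is correct and is essentially the paper's argument: the paper also uses the pointwise bound $|R_i^{\mathrm{gated}}(t)|=\kappa(t)|R_i^{\mathrm{ungated}}(t)|\le|R_i^{\mathrm{ungated}}(t)|$, then squares both sides and takes expectations. Your extra remarks on finiteness of the second moments and the identity $\mathbb{E}[(1-\kappa^2)(R_i^{\mathrm{ungated}})^2]\ge 0$ are valid but not needed.
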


\begin{proof}
For every transition, $|R_i^{\mathrm{gated}}(t)|=\kappa(t)|R_i^{\mathrm{ungated}}(t)|\le |R_i^{\mathrm{ungated}}(t)|$. Squaring both sides and taking expectations proves the claim.
\end{proof}

\begin{proposition}[Bounded perturbation of the extrinsic objective]
\label{prop:bounded-perturbation}
Consider the discounted shaped objective obtained by adding the intrinsic reward to the extrinsic team reward. Suppose the extrinsic objective $J_{\mathrm{ext}}$ has a unique maximizer $\pi^{\star}$ with margin $\Delta>0$, meaning
\begin{equation}
    J_{\mathrm{ext}}(\pi^{\star})\ge J_{\mathrm{ext}}(\pi)+\Delta
    \quad \text{for all } \pi\neq \pi^{\star}.
\end{equation}
If $0\le c_i(t)\le c_{\max}$ and $0\le\kappa(t)\le1$, then the total discounted intrinsic return over $N$ agents is bounded by
\begin{equation}
    0\le J_{\mathrm{int}}(\pi)
    \le
    \frac{N\lambda_{\mathrm{int}}c_{\max}}{1-\gamma}.
\end{equation}
Therefore, if
\begin{equation}
    \frac{N\lambda_{\mathrm{int}}c_{\max}}{1-\gamma}<\Delta,
\end{equation}
then $\pi^{\star}$ remains the unique maximizer of $J_{\mathrm{ext}}+J_{\mathrm{int}}$.
\end{proposition}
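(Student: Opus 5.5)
The argument has two parts: a pointwise bound on the per-step intrinsic reward summed over agents, and a margin comparison that uses the uniqueness gap $\Delta$.

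\textbf{Step 1: bound $J_{\mathrm{int}}$.} For any policy $\pi$, define the discounted intrinsic return as
\begin{equation*}
J_{\mathrm{int}}(\pi)=\mathbb{E}_\pi\Big[\sum_{t\ge0}\gamma^t\sum_{i=1}^N r^{\mathrm{int}}_{i,t}\Big],
\end{equation*}
the natural team-level aggregation of the agent-specific shaped rewards. By Proposition~\ref{prop:gate-bounds}, every transition satisfies $0\le r^{\mathrm{int}}_{i,t}\le\lambda_{\mathrm{int}}c_{\max}$. Summing over $N$ agents gives $0\le\sum_i r^{\mathrm{int}}_{i,t}\le N\lambda_{\mathrm{int}}c_{\max}$ pointwise. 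Multiplying by $\gamma^t$ and summing the geometric series (with $\gamma<1$) yields, pathwise,
\begin{equation*}
0\le \sum_{t\ge0}\gamma^t\sum_{i=1}^N r^{\mathrm{int}}_{i,t}\le \frac{N\lambda_{\mathrm{int}}c_{\max}}{1-\gamma}.
\end{equation*}
Taking expectations preserves both inequalities. Interchanging sum and expectation is justified by dominated convergence, since the summands are bounded. This holds uniformly over $\pi$, even though $c_i$ and $\kappa$ depend on the policy through the rollouts and value estimates. The only thing used is the range of these quantities, so the running statistics can be treated as fixed.

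\textbf{Step 2: margin argument.} Let $\pi\neq\pi^\star$. Using $J_{\mathrm{int}}(\pi^\star)\ge0$, the margin, and the upper bound on $J_{\mathrm{int}}(\pi)$ together with the hypothesis $\frac{N\lambda_{\mathrm{int}}c_{\max}}{1-\gamma}<\Delta$, we get
\begin{equation*}
J_{\mathrm{ext}}(\pi^\star)+J_{\mathrm{int}}(\pi^\star)\ \ge\ J_{\mathrm{ext}}(\pi)+\Delta\ >\ J_{\mathrm{ext}}(\pi)+\frac{N\lambda_{\mathrm{int}}c_{\max}}{1-\gamma}\ \ge\ J_{\mathrm{ext}}(\pi)+J_{\mathrm{int}}(\pi).
\end{equation*}
The strict inequality shows that $\pi^\star$ strictly beats every other policy under $J_{\mathrm{ext}}+J_{\mathrm{int}}$. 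Hence $\pi^\star$ is a maximizer and no other policy attains the maximum, so it is the unique maximizer.

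\textbf{Main obstacle.} The delicate point is not the algebra but the modeling assumption behind $J_{\mathrm{int}}$. The intrinsic reward depends on a learned forward model, running statistics, and $V^{\mathrm{ext}}_\omega$, so it is not a fixed reward function of the state and action. I would state explicitly that the comparison is made for a frozen reward-generating process, or equivalently that only the pointwise range bound is invoked. The margin argument only needs a uniform upper bound and nonnegativity, and both survive this dependence.
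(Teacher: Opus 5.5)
Your proof is correct and takes essentially the same route as the paper: you sum the per-agent bound from Proposition~\ref{prop:gate-bounds} over agents and discounted time, then compare margins using $J_{\mathrm{int}}(\pi^{\star})\ge 0$, a step the paper uses only implicitly. Your remark that the comparison treats the reward-generating process (forward model, running statistics, $V^{\mathrm{ext}}_{\omega}$) as frozen is a useful clarification that the paper leaves unstated.
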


\begin{proof}
The per-agent bound in Proposition~\ref{prop:gate-bounds} gives $0\le r_{i,t}^{\mathrm{int}}\le \lambda_{\mathrm{int}}c_{\max}$. Summing over $N$ agents and over discounted time gives
\begin{equation}
    0\le J_{\mathrm{int}}(\pi)
    \le
    \sum_{t=0}^{\infty}\gamma^t N\lambda_{\mathrm{int}}c_{\max}
    =
    \frac{N\lambda_{\mathrm{int}}c_{\max}}{1-\gamma}.
\end{equation}
For any $\pi\neq\pi^{\star}$,
\begin{align}
    (J_{\mathrm{ext}}+J_{\mathrm{int}})(\pi^{\star})
    -
    (J_{\mathrm{ext}}+J_{\mathrm{int}})(\pi)
    &\ge
    \Delta
    -
    J_{\mathrm{int}}(\pi) \\
    &\ge
    \Delta
    -
    \frac{N\lambda_{\mathrm{int}}c_{\max}}{1-\gamma}.
\end{align}
The right-hand side is positive under the stated condition, so no policy $\pi\neq\pi^{\star}$ can overtake $\pi^{\star}$ under the shaped objective.
\end{proof}

\paragraph{Remark on objective shift.}
The intrinsic reward used by \textsc{MAGIC} is a training-time shaping signal, so the shaped training objective is not identical to the original extrinsic objective. Proposition~A.11 should be read as a bounded-perturbation result rather than a general convergence guarantee for function-approximation MARL. It shows that the intrinsic term cannot arbitrarily dominate the extrinsic objective when $\lambda_{\mathrm{int}}$ and $c_{\max}$ are fixed, and that the original optimum is preserved whenever the extrinsic margin is larger than the maximum discounted intrinsic perturbation. In the main experiments, this theoretical control is paired with the extrinsic advantage gate, which further suppresses action effect rewards on low-advantage transitions.

\section{Additional Experimental Details and Results}
\label{app:exp-details}

This appendix provides the experimental details and additional results referenced by Section~\ref{sec:experiments}. We first describe environments, baselines, architectures, training protocols, evaluation metrics, and compute. We then provide the full tables supporting the main experimental claims.

\subsection{Environment Details and Reward Protocols}
\label{app:env-details}

The MPE experiments use three PettingZoo MPE tasks: Predator Prey, Cooperative Navigation, and Cooperative Competitive. All MPE tasks use five learning agents and an episode length of $25$ steps unless otherwise stated. Predator Prey evaluates cooperative pursuit, Cooperative Navigation evaluates landmark coverage and collision avoidance, and Cooperative Competitive evaluates mixed cooperative-competitive interaction. We use the default environment dynamics and reward functions of the benchmark implementation. The team return is computed from the environment rewards and is used as the extrinsic reward for all methods. For the mixed cooperative-competitive MPE task, we follow the cooperative-team evaluation protocol and compute the reported team return over the controlled cooperative agents; the same scalar team reward is used as $r_t^{\mathrm{ext}}$ for all compared methods.

The SMAC and SMACv2 experiments use six micromanagement maps: \texttt{3s5z}, \texttt{5m\_vs\_6m}, \texttt{corridor}, \texttt{6h\_vs\_8z}, \texttt{MMM2}, and \texttt{Protoss5v5}. These tasks evaluate coordination under partial observability, discrete actions, action masking, and delayed team rewards. We use the standard win rate as the main metric. For all benchmark groups, MAGIC and the baselines are evaluated under the same environment reward. The intrinsic reward is used only as an additional training signal.

\paragraph{Existing assets and licenses.}
We use existing benchmark environments and software frameworks only for academic research comparisons. 
The MPE experiments use the PettingZoo implementation of MPE~\citep{terry2021pettingzoo}. 
PettingZoo is publicly released by the Farama Foundation, with Farama-owned code released under the MIT license and included third-party components under their corresponding MIT or Apache-2.0 licenses. 
The SMAC and SMACv2 experiments use the public StarCraft micromanagement benchmark interfaces~\citep{samvelyan2019starcraft,ellis2023smacv2}. 
SMACv2 is released under the MIT license, and the PyMARL2 codebase used for the MAPPO-based protocol is released under the Apache-2.0 license. 
StarCraft II is used only as a third-party simulator through the benchmark interface, and we do not redistribute any StarCraft II game assets. 
All baseline methods and benchmark assets are cited in the paper and are used under their released research terms.

\begin{table}[H]
\centering
\caption{MPE task configurations used in our experiments.}
\label{tab:app-mpe-config}
\small
\begin{tabular}{lccc}
\toprule
Task & Agents & Team structure & Episode length \\
\midrule
Predator Prey & 5 & cooperative predators & 25 \\
Cooperative Navigation & 5 & fully cooperative & 25 \\
Cooperative Competitive & 5 & mixed cooperative-competitive & 25 \\
\bottomrule
\end{tabular}
\end{table}

\subsection{Baselines and Fairness Protocol}
\label{app:baseline-details}

Within each benchmark group, all compared methods use matched training budgets, random seeds, observation settings, and evaluation schedules. On MPE, all methods are implemented on the same MADDPG-based CTDE backbone. MADDPG is the plain backbone without intrinsic rewards. SI, PMIC, SCIC, and MAGIC use the same environment interaction budget and the same replay-buffer training schedule. No intrinsic-reward method receives extra environment interaction.

On SMAC and SMACv2, MAGIC is evaluated with a MAPPO-based CTDE protocol implemented in PyMARL2. MAPPO is included as the plain policy-gradient backbone. QMIX is included as a representative value-decomposition baseline. PMIC, SCIC, and GradPS are evaluated under matched observation spaces, action masks, training budgets, and evaluation procedures whenever applicable. The purpose of this protocol is to compare the coordination signal rather than changes in environment access or evaluation budget. For baseline-specific hyperparameters, we use the official or recommended settings when available, while keeping the shared backbone, training budget, observation space, action masks, and evaluation protocol matched within each benchmark group.

\subsection{Network Architectures and Hyperparameters}
\label{app:hyperparameters}

For MPE, actors use two hidden layers of size $(128,128)$ with ReLU activations and a final tanh squashing layer for bounded continuous actions. The centralized critic and the extrinsic value network use two hidden layers of size $(256,256)$. The extrinsic value network is trained only from environment rewards and supplies $V_{\omega}^{\mathrm{ext}}$ for the gate. The forward model is a two-layer MLP with hidden sizes $(256,256)$ and ReLU activations. It maps the current centralized state and joint action to the next centralized state. MAGIC does not use an additional mutual-information estimator in the action effect module.

For SMAC and SMACv2, we use the default MAPPO configuration from PyMARL2. 
The recurrent actor uses a one-layer GRU with hidden dimension 64, and we keep the default centralized value architecture and PPO update settings. 
The SMAC/SMACv2 forward model uses the same two-layer MLP architecture with hidden sizes $(256,256)$ and ReLU activations. Its input is the centralized state concatenated with one-hot joint actions, and its output is the next centralized state.
MAGIC introduces no modifications to the backbone policy network, action masking, or PPO objective. The only value-side addition is the extrinsic-only value head used for the gate.
MAGIC adds only the training-time action effect module: a forward model, counterfactual action-replacement branches, teammate-future feature differences, action effect aggregation, and the extrinsic team-advantage gate. 
The policy used at execution is unchanged from the MAPPO backbone.

\begin{table}[H]
\centering
\caption{Main hyperparameters for the MADDPG-based MPE experiments.}
\label{tab:app-mpe-hyperparams}
\small
\begin{tabular}{lc}
\toprule
Hyperparameter & Value \\
\midrule
Discount factor $\gamma$ & 0.95 \\
Episode length & 25 \\
Total environment steps per task & $4\times 10^6$ \\
Optimizer & Adam \\
Learning rate for actor, critic, and auxiliary networks & $1\times 10^{-3}$ \\
Replay buffer size & $10^6$ transitions \\
Mini-batch size & 1024 \\
Target-network update & Polyak, $\tau=0.01$ \\
Gradient clipping & global norm 5.0 \\
Main rollout horizon $H$ & 3 \\
Number of counterfactual branches $K$ & 64 \\
Intrinsic weight $\lambda_{\mathrm{int}}$ & $0.05$ \\
Number of random seeds & 5 \\
Evaluation episodes per checkpoint & 10 \\
Evaluation policy & deterministic \\
\bottomrule
\end{tabular}
\end{table}

For SMAC and SMACv2, we use StarCraft II version 4.10.0 and a unified MAPPO protocol. The shared settings are Adam with learning rate $5\times 10^{-4}$, 8 parallel workers, 3200 collected transitions per update cycle, 5 PPO epochs, PPO clipping ratio 0.2, and GAE parameter $\lambda=0.95$. Each method is trained for 10 million environment steps without map-specific tuning. Evaluation is performed every 10,000 environment steps using 32 deterministic test episodes.

Table~\ref{tab:magic_hyperparams} lists the MAGIC-specific hyperparameters used in the main experiments. We keep the intrinsic-reward hyperparameters fixed across tasks and vary them only in the diagnostic studies explicitly reported below. The $1.00\times$ forward model update ratio in Table~\ref{tab:reliability} corresponds to this main setting. The ratios $0.25\times$, $0.50\times$, $0.75\times$, and $1.00\times$ correspond to 2, 5, 8, and 10 forward model epochs per RL iteration, respectively.

\begin{table}[H]
\centering
\caption{Core MAGIC-specific hyperparameters used in the main experiments. The same intrinsic-reward hyperparameters are used across benchmark groups unless the entry is tied to the action space. No task-specific tuning of $H$, $K$, $\lambda_{\mathrm{int}}$, $c_{\max}$, or the gate temperature is used.}
\label{tab:magic_hyperparams}
\footnotesize
\setlength{\tabcolsep}{4pt}
\renewcommand{\arraystretch}{1.12}
\begin{tabular}{@{}p{0.42\linewidth}p{0.50\linewidth}@{}}
\toprule
Hyperparameter & Value \\
\midrule
Intrinsic reward weight $\lambda_{\mathrm{int}}$ & $0.05$ \\
Main rollout horizon $H$ & $3$ \\
Number of counterfactual branches $K$ & $64$ \\
Horizon weights $w_h$ & Uniform \\
Score clipping threshold $c_{\max}$ & $5.0$ \\
Score normalization $\epsilon$ & $10^{-5}$ \\
Advantage normalization $\epsilon$ & $10^{-5}$ \\
Running-stat update rule & Batch EMA with momentum $0.99$ \\
Running-stat pooling scope & Training-batch transitions; source agents pooled for $\sigma_c$ \\
Gate function $g(\bar{A}_{\mathrm{team}}^{\mathrm{ext}})$ & $\sigma(\bar{A}_{\mathrm{team}}^{\mathrm{ext}}/\tau)$ \\
Gate temperature $\tau$ & $1.0$ \\
Forward model updates per RL iteration & $10$ epochs \\
Teammate-feature distance & Euclidean distance in normalized feature space \\
Discrete counterfactual sampling (SMAC) & Uniform over valid alternatives, excluding the executed action when possible \\
Continuous counterfactual sampling (MPE) & Uniform over valid bounds $[-1,1]^d$ \\
\bottomrule
\end{tabular}
\end{table}

\subsection{Training Protocols}
\label{app:training-protocol}

For MPE, the MADDPG-based experiments use an off-policy CTDE training loop. At each environment step, the learner stores $(s_t,a_t,r_t^{\mathrm{ext}},s_{t+1})$ in a shared replay buffer. The actor-critic backbone, extrinsic value estimate, and forward model are updated from minibatches sampled from this buffer. The extrinsic value estimate is updated from one-step TD targets that use only $r_t^{\mathrm{ext}}$. MAGIC computes the intrinsic reward from sampled transitions using the learned forward model and the factual and counterfactual branch procedure described in Appendix~\ref{app:method-details}. Each agent critic target uses the corresponding agent-specific shaped reward $r_{i,t}^{\mathrm{total}}$. This computation does not require additional environment interaction.

For SMAC and SMACv2, the MAPPO-based experiments are on-policy. MAGIC computes intrinsic rewards inside the MAPPO training batches before the policy and value updates. The intrinsic reward is added to the extrinsic team reward for each source agent separately, while the gate itself is computed from the extrinsic team advantage. PPO advantages and value targets are computed from the resulting agent-specific shaped rewards. The extrinsic-only value head used by the gate is updated from extrinsic-return targets and does not receive intrinsic rewards. The forward model is trained on one-step transitions from the same rollout batches. The MAGIC module is removed during evaluation and execution.

\subsection{Evaluation Metrics, Smoothing, and Statistical Testing}
\label{app:evaluation}

For MPE, the main metric is episodic team return. For SMAC and SMACv2, the main metric is win rate. We report means over five random seeds. When a table reports standard deviation, it is computed across the five seed-level scalar values.

The scalar metrics are computed as follows. Final performance is the average of the last $K_{\mathrm{eval}}=10$ evaluation points of a training run, using the same $K_{\mathrm{eval}}$ across methods within a benchmark group. Best performance is the maximum evaluation value observed over training. AUC is the area under the evaluation curve, normalized by the training horizon. The learning curves in the main text are smoothed only for visualization, and the scalar results are computed from the per-seed evaluation sequences.

For the MPE scalar tables, we report paired two-sided $t$-tests comparing each baseline against MAGIC using matched seed indices. These $p$-values are descriptive statistics and are not corrected for multiple comparisons. They are used to indicate whether the final-return differences are consistent across seeds.

\subsection{Compute Resources and Runtime}
\label{app:compute}

All runtime measurements are conducted under the same software stack and hardware setup used for the corresponding benchmark group. 
The experiments are run on a workstation equipped with an NVIDIA RTX 3090 GPU. 
For each runtime comparison, we use the same environment steps, random seeds, evaluation schedule, and backbone configuration as in the corresponding performance experiment. 
We report representative GPU-hours for one seed on one task or map; total compute scales approximately linearly with the number of seeds and evaluated tasks.

MAGIC adds computation only during training. 
The additional cost comes from constructing counterfactual branches, rolling them out with the learned forward model, extracting teammate future features, and aggregating factual and counterfactual differences. 
In the main setting, we use a moderate rollout horizon $H=3$ and $K=64$ counterfactual branches. 
During execution, MAGIC uses the same decentralized policies as the underlying CTDE backbone and does not use the forward model, counterfactual branches, action effect computation, or advantage gate. 
Therefore, MAGIC introduces no additional execution-time overhead.

Under matched settings, MAGIC increases training time by approximately $10\%$ relative to SCIC. 
For context, SCIC is approximately $15\%$ slower than MADDPG in our MPE implementation and approximately $25\%$ slower than MAPPO in our SMAC/SMACv2 implementation. 
Thus, MAGIC has a higher training-time cost than standard CTDE backbones, but this overhead is limited to training and is accompanied by substantial performance gains. 
Table~\ref{tab:app-runtime} summarizes the runtime-performance trade-off.

\begin{table}[h]
\centering
\caption{Representative compute cost and performance trade-off. GPU-hours are measured for one seed on one task or map using an NVIDIA RTX 3090, with 4M environment steps for MPE and 10M environment steps for SMAC/SMACv2. Execution-time overhead is zero because MAGIC uses only the decentralized backbone policy during execution. For MPE, percentage gains are normalized by the magnitude of the reference final return because some MPE tasks have negative returns.}
\label{tab:app-runtime}
\small
\setlength{\tabcolsep}{4pt}
\begin{tabular}{lcccc}
\toprule
Benchmark group & MAGIC GPU-hours & Reference & Training overhead & Performance gain \\
\midrule
MPE & 3.9 & SCIC & $\approx 10\%$ & $+17.5\%$ to $+36.4\%$ final return \\
SMAC/SMACv2 & 11.4 & SCIC & $\approx 10\%$ & $+2.6$ to $+14.1$ Win\% points \\
MPE & 3.9 & MADDPG & $\approx 26.5\%$ & $+68.9\%$ Predator Prey final return \\
SMAC/SMACv2 & 11.4 & MAPPO & $\approx 37.5\%$ & $+12.7$ average Win\% points \\
\bottomrule
\end{tabular}
\end{table}

The reported GPU-hours are representative single-seed measurements for one task or map. 
Since all main results use five random seeds, the total reported compute for a benchmark group is approximately five times the single-seed cost times the number of evaluated tasks or maps, plus the corresponding baseline runs.

\subsection{Full MPE Results}
\label{app:mpe-full-results}

Table~\ref{tab:app-mpe-full} reports the full scalar metrics for the three MPE tasks. These results correspond to the learning curves in Figure~\ref{fig:mpe_curves}. Table~\ref{tab:app-mpe-10agent} reports the 10-agent Predator Prey setting.

\begin{table}[H]
\centering
\caption{Full results on the MPE benchmark over five random seeds.}
\label{tab:app-mpe-full}
\scriptsize
\setlength{\tabcolsep}{2pt}
\begin{tabular}{lccccc|ccccc|ccccc}
\toprule
& \multicolumn{5}{c|}{Predator Prey} & \multicolumn{5}{c|}{Cooperative Navigation} & \multicolumn{5}{c}{Cooperative Competitive} \\
Method & Final $\uparrow$ & Std $\downarrow$ & Best $\uparrow$ & AUC $\uparrow$ & $p$-val & Final $\uparrow$ & Std $\downarrow$ & Best $\uparrow$ & AUC $\uparrow$ & $p$-val & Final $\uparrow$ & Std $\downarrow$ & Best $\uparrow$ & AUC $\uparrow$ & $p$-val \\
\midrule
MADDPG & 34.1 & 0.87 & 35.7 & 28.4 & $<0.001$ & -24.5 & 1.07 & -24.2 & -28.3 & $<0.001$ & -2.3 & 0.17 & -1.6 & -3.1 & $<0.001$ \\
SI & 42.9 & 0.59 & 43.9 & 30.5 & $<0.001$ & -23.9 & 0.46 & -23.7 & -27.5 & $<0.001$ & -1.3 & 0.12 & -0.5 & -1.6 & $<0.001$ \\
PMIC & 44.6 & 0.69 & 45.8 & 39.8 & $<0.001$ & -25.2 & 0.83 & -24.6 & -28.5 & $<0.001$ & -1.2 & 0.14 & -0.8 & -1.4 & $<0.001$ \\
SCIC & 45.4 & 1.05 & 47.9 & 34.8 & $<0.001$ & -22.8 & 0.52 & -21.7 & -25.9 & $<0.001$ & -1.1 & 0.23 & -0.4 & -1.4 & $<0.001$ \\
MAGIC & \textbf{57.6} & 0.75 & \textbf{62.1} & \textbf{48.1} & -- & \textbf{-18.8} & 0.46 & \textbf{-18.4} & \textbf{-21.3} & -- & \textbf{-0.7} & 0.08 & \textbf{-0.1} & \textbf{-1.2} & -- \\
\bottomrule
\end{tabular}
\end{table}

\begin{table}[H]
\centering
\caption{Results on the 10-agent cooperative Predator Prey task over five random seeds.}
\label{tab:app-mpe-10agent}
\small
\begin{tabular}{lccccc}
\toprule
Method & Final $\uparrow$ & Std $\downarrow$ & Best $\uparrow$ & AUC $\uparrow$ & $p$-val \\
\midrule
MADDPG & 15.3 & 1.12 & 17.1 & 11.5 & $<0.001$ \\
SI & 26.8 & 0.94 & 28.4 & 21.2 & $<0.001$ \\
PMIC & 31.5 & 0.88 & 33.2 & 27.6 & $<0.001$ \\
SCIC & 34.9 & 1.37 & 39.5 & 26.9 & $<0.001$ \\
MAGIC & \textbf{51.2} & 0.82 & \textbf{56.3} & \textbf{41.8} & -- \\
\bottomrule
\end{tabular}
\end{table}

\subsection{Full SMAC and SMACv2 Results}
\label{app:smac-full-results}

Tables~\ref{tab:app-smac-win} and~\ref{tab:app-smac-auc} provide standard deviations for the final win-rate and AUC results reported in Table~\ref{tab:smac_main}.

\begin{table}[H]
\centering
\caption{Final win rates on SMAC and SMACv2 over five random seeds.}
\label{tab:app-smac-win}
\small
\setlength{\tabcolsep}{3pt}
\begin{tabular}{lccccccc}
\toprule
Method & 3s5z & 5m\_vs\_6m & corridor & 6h\_vs\_8z & MMM2 & Protoss5v5 & Avg. \\
\midrule
QMIX  & 92.4{\scriptsize$\pm$2.6} & 78.5{\scriptsize$\pm$3.8} & 48.2{\scriptsize$\pm$5.1} & 58.6{\scriptsize$\pm$4.5} & 74.2{\scriptsize$\pm$4.2} & 36.8{\scriptsize$\pm$5.5} & 64.8 \\
MAPPO & 94.1{\scriptsize$\pm$2.2} & 90.5{\scriptsize$\pm$2.8} & 54.8{\scriptsize$\pm$5.3} & 71.2{\scriptsize$\pm$3.9} & 81.4{\scriptsize$\pm$3.5} & 47.5{\scriptsize$\pm$5.8} & 73.3 \\
PMIC  & 89.3{\scriptsize$\pm$3.1} & 83.7{\scriptsize$\pm$4.2} & 61.2{\scriptsize$\pm$4.8} & 65.8{\scriptsize$\pm$4.6} & 72.1{\scriptsize$\pm$4.4} & 44.3{\scriptsize$\pm$5.1} & 69.4 \\
GradPS & 94.8{\scriptsize$\pm$2.0} & 91.2{\scriptsize$\pm$2.6} & 64.5{\scriptsize$\pm$4.5} & 74.0{\scriptsize$\pm$3.8} & 81.8{\scriptsize$\pm$3.5} & 49.5{\scriptsize$\pm$5.2} & 76.0 \\
SCIC  & 95.2{\scriptsize$\pm$1.8} & 91.8{\scriptsize$\pm$2.5} & 72.6{\scriptsize$\pm$4.1} & 76.5{\scriptsize$\pm$3.4} & 80.3{\scriptsize$\pm$3.1} & 52.4{\scriptsize$\pm$4.8} & 78.1 \\
MAGIC & \textbf{97.8}{\scriptsize$\pm$1.5} & \textbf{95.6}{\scriptsize$\pm$2.1} & \textbf{83.4}{\scriptsize$\pm$3.2} & \textbf{85.2}{\scriptsize$\pm$2.8} & \textbf{87.3}{\scriptsize$\pm$2.4} & \textbf{66.5}{\scriptsize$\pm$3.6} & \textbf{86.0} \\
\bottomrule
\end{tabular}
\end{table}

\begin{table}[H]
\centering
\caption{AUC of win-rate learning curves on SMAC and SMACv2 over five random seeds.}
\label{tab:app-smac-auc}
\small
\setlength{\tabcolsep}{3pt}
\begin{tabular}{lccccccc}
\toprule
Method & 3s5z & 5m\_vs\_6m & corridor & 6h\_vs\_8z & MMM2 & Protoss5v5 & Avg. \\
\midrule
QMIX  & 78.4{\scriptsize$\pm$2.1} & 52.3{\scriptsize$\pm$3.0} & 25.1{\scriptsize$\pm$3.8} & 32.5{\scriptsize$\pm$3.2} & 42.1{\scriptsize$\pm$3.5} & 15.4{\scriptsize$\pm$3.6} & 41.0 \\
MAPPO & 80.2{\scriptsize$\pm$1.8} & 68.5{\scriptsize$\pm$2.2} & 30.6{\scriptsize$\pm$4.1} & 44.1{\scriptsize$\pm$2.8} & 51.5{\scriptsize$\pm$2.6} & 22.3{\scriptsize$\pm$4.2} & 49.5 \\
PMIC  & 72.1{\scriptsize$\pm$2.5} & 58.4{\scriptsize$\pm$3.5} & 35.2{\scriptsize$\pm$3.6} & 38.6{\scriptsize$\pm$3.5} & 41.8{\scriptsize$\pm$3.2} & 19.5{\scriptsize$\pm$3.8} & 44.3 \\
GradPS & 81.5{\scriptsize$\pm$1.6} & 70.1{\scriptsize$\pm$2.0} & 38.4{\scriptsize$\pm$3.5} & 46.8{\scriptsize$\pm$2.9} & 53.2{\scriptsize$\pm$2.8} & 24.1{\scriptsize$\pm$4.0} & 52.4 \\
SCIC  & 82.3{\scriptsize$\pm$1.5} & 71.4{\scriptsize$\pm$1.9} & 45.8{\scriptsize$\pm$3.1} & 49.2{\scriptsize$\pm$2.5} & 51.8{\scriptsize$\pm$2.4} & 26.5{\scriptsize$\pm$3.5} & 54.5 \\
MAGIC & \textbf{86.5}{\scriptsize$\pm$1.2} & \textbf{78.2}{\scriptsize$\pm$1.6} & \textbf{55.6}{\scriptsize$\pm$2.5} & \textbf{59.4}{\scriptsize$\pm$2.2} & \textbf{62.3}{\scriptsize$\pm$1.9} & \textbf{38.1}{\scriptsize$\pm$2.8} & \textbf{63.4} \\
\bottomrule
\end{tabular}
\end{table}

\subsection{Full Component Analysis Results}
\label{app:component-full-results}

This section provides the detailed results corresponding to Table~\ref{tab:component_analysis}. Table~\ref{tab:app-mpe-ablation} reports the MPE module ablation on Predator Prey. Table~\ref{tab:app-smac-component} reports component attribution results on representative SMAC maps with standard deviations over five random seeds.

\begin{table}[H]
\centering
\caption{MPE module ablation on Predator Prey over five random seeds.}
\label{tab:app-mpe-ablation}
\small
\begin{tabular}{lcccc}
\toprule
Method & Final $\uparrow$ & Best $\uparrow$ & AUC $\uparrow$ & $p$-val \\
\midrule
\textsc{MAGIC} & $57.6\pm0.8$ & 62.1 & 48.1 & -- \\
\textsc{MAGIC} w/o Advantage Gating & $47.3\pm0.9$ & 48.9 & 38.6 & $<0.001$ \\
\textsc{MAGIC} $H=1$ Action Effect & $41.8\pm0.7$ & 43.4 & 37.5 & $<0.001$ \\
\bottomrule
\end{tabular}
\end{table}

\begin{table}[H]
\centering
\caption{Component attribution on representative SMAC maps over five random seeds.}
\label{tab:app-smac-component}
\small
\setlength{\tabcolsep}{5pt}
\begin{tabular}{llcc}
\toprule
Map & Method & Win\% $\uparrow$ & Std $\downarrow$ \\
\midrule
corridor & MAPPO & 54.8 & $\pm5.3$ \\
 & \textsc{MAGIC} $H=1$+Gate & 73.1 & $\pm3.8$ \\
 & \textsc{MAGIC} $H=3$ w/o Gate & 77.5 & $\pm3.6$ \\
 & \textsc{MAGIC} $H=3$ & \textbf{83.4} & $\pm3.2$ \\
\midrule
5m\_vs\_6m & MAPPO & 90.5 & $\pm2.8$ \\
 & \textsc{MAGIC} $H=1$+Gate & 92.4 & $\pm2.3$ \\
 & \textsc{MAGIC} $H=3$ w/o Gate & 91.8 & $\pm3.9$ \\
 & \textsc{MAGIC} $H=3$ & \textbf{95.6} & $\pm2.1$ \\
\midrule
MMM2 & MAPPO & 81.4 & $\pm3.5$ \\
 & \textsc{MAGIC} $H=1$+Gate & 82.1 & $\pm3.4$ \\
 & \textsc{MAGIC} $H=3$ w/o Gate & 83.8 & $\pm3.7$ \\
 & \textsc{MAGIC} $H=3$ & \textbf{87.3} & $\pm2.4$ \\
\bottomrule
\end{tabular}
\end{table}

\subsection{Comparison with Agent-Specific Gate Variants}
\label{app:agent-gate}

We further compare the shared team-advantage gate with an agent-specific counterfactual-Q gate. The agent-specific gate uses an extrinsic centralized Q estimate to compare the executed joint action with counterfactual replacements of the source agent's action, giving each source agent its own gate value. This comparison tests whether the shared gate is only a simplification or a useful task-level filter.

For this variant, we compute
\[
A_i^Q(t)
=
Q^{\mathrm{ext}}(s_t,a_t)
-
\frac{1}{K}\sum_{k=1}^{K}
Q^{\mathrm{ext}}(s_t,(a_t^{i,k},a_t^{-i})),
\]
and use $\kappa_i^Q(t)=\sigma(\bar A_i^Q(t)/\tau)$ as the agent-specific gate.
The extrinsic Q estimate is trained only with environment rewards, and we use the same $K$, normalization rule, and gate temperature as the main MAGIC setting.

Table~\ref{tab:agent_gate} shows that the agent-specific gate improves over the no-gate variant on both Predator Prey and corridor, but it does not improve over the shared team-advantage gate. This supports our design choice. Source specificity is already provided by the action effect score, while the shared gate gives a stable task-level filter.

\begin{table}[H]
\centering
\caption{Comparison between the shared team-advantage gate and an agent-specific counterfactual-Q gate. Results are averaged over five seeds. The agent-specific gate improves over the no-gate variant, but it does not improve over the shared team-advantage gate. This supports the use of a shared task-level filter, while source specificity is already provided by the action effect score.}
\label{tab:agent_gate}
\small
\begin{tabular}{lcc}
\toprule
Gate variant & Predator Prey Final Return & corridor Win\% \\
\midrule
No gate & $47.3 \pm 0.9$ & $77.5 \pm 3.6$ \\
Agent-specific counterfactual-Q gate & $55.9 \pm 2.6$ & $80.5 \pm 3.9$ \\
Shared team-advantage gate (MAGIC) & $\mathbf{57.6 \pm 0.8}$ & $\mathbf{83.4 \pm 3.2}$ \\
\bottomrule
\end{tabular}
\end{table}

\subsection{Full Reliability Diagnostics}
\label{app:reliability-full}

This section provides the detailed diagnostics corresponding to Table~\ref{tab:reliability}. In-MSE measures prediction error on normal policy rollouts. Int-MSE measures prediction error under action-replacement rollouts. Sep. AUC measures separability of branch effects, namely whether model-predicted teammate-future differences reflect the true strength of action effects.

\begin{table}[H]
\centering
\caption{Fixed-horizon reliability on \texttt{5m\_vs\_6m} with $H=3$ under different forward model update ratios.}
\label{tab:app-fixed-h-reliability}
\small
\begin{tabular}{ccccc}
\toprule
Update ratio & In-MSE & Int-MSE & Sep. AUC & Win\% \\
\midrule
$0.25\times$ & 0.135 & 0.228 & 0.57 & 86.8 \\
$0.50\times$ & 0.072 & 0.108 & 0.74 & 91.2 \\
$0.75\times$ & 0.038 & 0.051 & 0.87 & 94.3 \\
$1.00\times$ & 0.025 & 0.032 & 0.91 & \textbf{95.6} \\
\bottomrule
\end{tabular}
\end{table}

\begin{table}[H]
\centering
\caption{Full horizon sensitivity and rollout-error diagnostics.}
\label{tab:app-reliability-full}
\small
\setlength{\tabcolsep}{4pt}
\begin{tabular}{lccccc}
\toprule
Map & $H$ & In-MSE & Int-MSE & Sep. AUC & Win\% \\
\midrule
corridor & 1 & 0.012 & 0.015 & 0.94 & $73.1\pm3.8$ \\
 & 2 & 0.022 & 0.028 & 0.92 & $78.6\pm3.4$ \\
 & 3 & 0.038 & 0.046 & 0.90 & $83.4\pm3.2$ \\
 & 5 & 0.075 & 0.098 & 0.82 & $80.5\pm3.6$ \\
 & 8 & 0.165 & 0.230 & 0.58 & $68.4\pm4.8$ \\
 & 10 & 0.245 & 0.360 & 0.49 & $62.5\pm5.4$ \\
5m\_vs\_6m & 1 & 0.008 & 0.010 & 0.95 & $92.4\pm2.3$ \\
 & 2 & 0.015 & 0.019 & 0.93 & $94.1\pm2.4$ \\
 & 3 & 0.025 & 0.032 & 0.91 & $95.6\pm2.1$ \\
 & 5 & 0.056 & 0.078 & 0.84 & $93.8\pm2.5$ \\
 & 8 & 0.138 & 0.205 & 0.59 & $82.5\pm4.3$ \\
 & 10 & 0.210 & 0.325 & 0.48 & $75.3\pm5.1$ \\
MMM2 & 1 & 0.010 & 0.013 & 0.94 & $82.1\pm3.4$ \\
 & 2 & 0.018 & 0.024 & 0.92 & $85.6\pm2.8$ \\
 & 3 & 0.030 & 0.038 & 0.89 & $87.3\pm2.4$ \\
 & 5 & 0.065 & 0.088 & 0.81 & $84.5\pm3.1$ \\
 & 8 & 0.150 & 0.220 & 0.58 & $73.2\pm4.5$ \\
 & 10 & 0.230 & 0.350 & 0.49 & $66.8\pm5.3$ \\
\bottomrule
\end{tabular}
\end{table}

\begin{table}[H]
\centering
\caption{Robustness to explicit forward model corruption on \texttt{5m\_vs\_6m} with $H=3$.}
\label{tab:app-corruption}
\small
\begin{tabular}{cccc}
\toprule
Noise level & Int-MSE & Sep. AUC & Win\% \\
\midrule
0.0 & 0.032 & 0.91 & \textbf{95.6} \\
0.1 & 0.055 & 0.88 & 94.2 \\
0.5 & 0.120 & 0.79 & 91.5 \\
1.0 & 0.280 & 0.55 & 82.4 \\
\bottomrule
\end{tabular}
\end{table}

\subsection{MPE Horizon Sensitivity}
\label{app:mpe-horizon}

Table~\ref{tab:app-mpe-horizon} reports the horizon sensitivity of MAGIC on MPE Predator Prey. The result follows the same rise-then-decline pattern as the SMAC horizon study in the main text.

\begin{table}[H]
\centering
\caption{Sensitivity of MAGIC to horizon $H$ on MPE Predator Prey.}
\label{tab:app-mpe-horizon}
\small
\begin{tabular}{ccccc}
\toprule
$H$ & Final $\uparrow$ & Best $\uparrow$ & AUC $\uparrow$ & $p$-val \\
\midrule
1 & $41.8\pm0.7$ & 43.4 & 37.5 & $<0.001$ \\
2 & $55.0\pm0.5$ & 57.8 & 44.7 & $<0.001$ \\
3 & $\mathbf{57.6\pm0.8}$ & \textbf{62.1} & \textbf{48.1} & -- \\
5 & $54.4\pm0.8$ & 55.9 & 47.7 & $<0.001$ \\
8 & $53.6\pm0.4$ & 55.1 & 43.5 & $<0.001$ \\
10 & $44.6\pm0.1$ & 45.8 & 42.4 & $<0.001$ \\
\bottomrule
\end{tabular}
\end{table}

\subsection{Sensitivity to the Number of Counterfactual Branches}
\label{app:k-ablation}

Table~\ref{tab:k_ablation} studies the number of counterfactual branches on MPE Predator Prey, a continuous-control task. Very small $K$ gives a noisy action effect estimate and lower separability of branch effects. Increasing $K$ improves both final return and Sep. AUC, but the gains saturate around $K=32$--$64$. Increasing $K$ to $128$ gives only a marginal additional improvement. These results indicate that MAGIC does not rely on dense search over the continuous action space. A moderate number of counterfactual branches is sufficient to estimate useful action effects.

\begin{table}[H]
\centering
\caption{Sensitivity to the number of counterfactual branches $K$ on MPE Predator Prey. Results are averaged over five seeds. Performance and separability of branch effects improve from very small $K$ and empirically saturate around $K=32$--$64$.}
\label{tab:k_ablation}
\small
\begin{tabular}{lcc}
\toprule
Branches $K$ & Final return & Sep. AUC \\
\midrule
$K=4$ & $48.2 \pm 4.7$ & $0.61$ \\
$K=8$ & $52.1 \pm 3.9$ & $0.73$ \\
$K=16$ & $55.8 \pm 2.5$ & $0.83$ \\
$K=32$ & $57.1 \pm 1.8$ & $0.87$ \\
$K=64$ (main) & $57.6 \pm 0.8$ & $0.89$ \\
$K=128$ & $\mathbf{57.8 \pm 1.5}$ & $\mathbf{0.90}$ \\
\bottomrule
\end{tabular}
\end{table}

\subsection{Robustness to Stochastic Action Execution}
\label{app:smac-action-slip}

To test whether the same reliability trend also appears outside MPE, we add a stochastic action-execution stress test on the SMAC corridor map. At each environment step, each agent's selected action is replaced with a uniformly sampled valid action with probability $p_{\mathrm{slip}}$. The random action is sampled from the valid action set given by the action mask. The same action-slip setting is used during training and evaluation.

Table~\ref{tab:smac_action_slip} shows that MAGIC remains stronger than MAPPO and SCIC under low and medium action stochasticity. As $p_{\mathrm{slip}}$ increases, Sep. AUC decreases, and the performance gap becomes smaller. This follows the same reliability pattern as the MPE stochasticity study. When separability of branch effects becomes weak, the benefit of multi-step action effect estimation is reduced.

\begin{table}[H]
\centering
\caption{Robustness to stochastic action execution on SMAC corridor. Each selected action is replaced by a uniformly sampled valid action with probability $p_{\mathrm{slip}}$. Results are averaged over five seeds. As stochasticity increases, separability of branch effects decreases. MAGIC remains strongest under low and medium action stochasticity, and stays close to the one-step influence baseline when separability becomes weak.}
\label{tab:smac_action_slip}
\small
\begin{tabular}{lcccc}
\toprule
$p_{\mathrm{slip}}$ & MAPPO & SCIC & MAGIC & Sep. AUC \\
\midrule
$0.00$ & $54.8 \pm 5.3$ & $72.6 \pm 4.1$ & $\mathbf{83.4 \pm 3.2}$ & $0.90$ \\
$0.05$ & $48.3 \pm 6.1$ & $64.1 \pm 5.5$ & $\mathbf{75.8 \pm 4.4}$ & $0.85$ \\
$0.10$ & $41.2 \pm 6.8$ & $51.5 \pm 7.2$ & $\mathbf{62.3 \pm 5.6}$ & $0.76$ \\
$0.20$ & $30.5 \pm 8.2$ & $32.8 \pm 7.9$ & $\mathbf{34.2 \pm 7.4}$ & $0.58$ \\
\bottomrule
\end{tabular}
\end{table}

\subsection{Robustness to Transition Stochasticity}
\label{app:stochasticity}

To test the robustness of the action effect estimator beyond deterministic dynamics, we inject zero-mean Gaussian noise with standard deviation $\sigma_{\mathrm{noise}}$ into the physical position and velocity components after each environment step, followed by the same state clipping used by the environment.
The same stochastic transition setting is used during training and evaluation.
Table~\ref{tab:stochasticity} reports the results.

MAGIC remains stronger than MADDPG and SCIC under low and medium stochasticity. As the noise level increases, Sep. AUC decreases, showing that separability of branch effects becomes harder to preserve. Under severe noise, Sep. AUC approaches random separation and MAGIC no longer provides a clear gain over the one-step influence baseline. However, the ungated variant collapses below the backbone, while the gated variant remains close to MADDPG and SCIC. This supports the role of the shared advantage gate as a conservative task-alignment filter. When the estimated action effect signal becomes unreliable, the gate reduces the damage from noisy intrinsic rewards.

\begin{table}[H]
\centering
\caption{Robustness to transition stochasticity and the role of the advantage gate on MPE Predator Prey.
We inject zero-mean Gaussian noise with standard deviation $\sigma_{\mathrm{noise}}$ into the physical position and velocity components after each environment step.
MAGIC remains effective under low and medium stochasticity.
When severe noise destroys separability of branch effects, the ungated variant degrades sharply, while the gated variant remains close to the backbone and the one-step influence baseline.}
\label{tab:stochasticity}
\small
\setlength{\tabcolsep}{4pt}
\begin{tabular}{lccccc}
\toprule
Noise level & MADDPG & SCIC & MAGIC w/o Gate & MAGIC & Sep. AUC \\
\midrule
$\sigma_{\mathrm{noise}}=0.00$ & $34.1 \pm 0.9$ & $45.4 \pm 1.1$ & $47.3 \pm 0.9$ & $\mathbf{57.6 \pm 0.8}$ & $0.89$ \\
$\sigma_{\mathrm{noise}}=0.05$ & $32.8 \pm 2.4$ & $43.1 \pm 3.0$ & $49.5 \pm 3.8$ & $\mathbf{55.2 \pm 2.1}$ & $0.84$ \\
$\sigma_{\mathrm{noise}}=0.15$ & $28.5 \pm 3.1$ & $36.7 \pm 4.2$ & $39.4 \pm 4.7$ & $\mathbf{46.3 \pm 3.8}$ & $0.72$ \\
$\sigma_{\mathrm{noise}}=0.30$ & $21.4 \pm 4.5$ & $23.2 \pm 5.1$ & $14.2 \pm 6.8$ & $22.8 \pm 6.2$ & $0.54$ \\
\bottomrule
\end{tabular}
\end{table}

\subsection{Artificial Delay Analysis}
\label{app:delay-results}

Table~\ref{tab:app-delay} studies how the useful horizon changes when additional reward delay is injected into the corridor map. The best horizon shifts toward larger values as delay increases, but the benefit saturates once longer rollouts accumulate excessive model error. The delay-$0$ MAPPO value is kept consistent with the corridor MAPPO result in Table~\ref{tab:smac_main}.

\begin{table}[H]
\centering
\caption{Impact of artificial delay on corridor win rates.}
\label{tab:app-delay}
\small
\begin{tabular}{ccccccc}
\toprule
Delay & MAPPO & $H=1$ & $H=3$ & $H=5$ & $H=8$ & $H=10$ \\
\midrule
0 & 54.8 & 73.1 & 83.4 & 80.5 & 68.4 & 62.5 \\
2 & 52.4 & 64.2 & 78.5 & 81.2 & 70.1 & 64.3 \\
4 & 35.5 & 55.4 & 68.2 & 75.6 & 76.4 & 68.5 \\
6 & 24.1 & 45.1 & 55.3 & 68.4 & 71.5 & 69.1 \\
\bottomrule
\end{tabular}
\end{table}

\subsection{Boundary under Severe Delay}
\label{app:severe-delay}

Table~\ref{tab:app-severe-delay} reports a severe-delay setting on corridor with delay $d=12$. Larger horizons still help relative to $H=1$, but absolute performance remains low and eventually drops at $H=10$. This marks a boundary of the current learned-rollout regime and shows why the rollout horizon should remain finite.

\begin{table}[H]
\centering
\caption{Boundary breakdown on corridor with delay $d=12$.}
\label{tab:app-severe-delay}
\small
\begin{tabular}{lcccc}
\toprule
Method & $H$ & Int-MSE & Sep. AUC & Win\% \\
\midrule
MAPPO & -- & -- & -- & 12.4 \\
MAGIC & 1 & 0.020 & 0.88 & 18.2 \\
MAGIC & 3 & 0.055 & 0.82 & 26.5 \\
MAGIC & 5 & 0.120 & 0.71 & 33.4 \\
MAGIC & 8 & 0.280 & 0.55 & 39.5 \\
MAGIC & 10 & 0.450 & 0.48 & 31.2 \\
\bottomrule
\end{tabular}
\end{table}

\section{Limitations}
\label{app:limitations}

MAGIC estimates action effects with finite-horizon learned rollouts.
This design is intended for settings where the learned model can preserve effect differences between branches over a moderate horizon.
Our reliability diagnostics and horizon studies provide an empirical criterion for selecting this range, and the main experiments use $H=3$.

MAGIC adds computation during training because it evaluates counterfactual branches.
It introduces no execution-time overhead because the module is removed after training.
For larger agent populations, the same estimator can be combined with source-agent subsampling or neighborhood-based teammate aggregation to reduce rollout cost.

This work uses point-estimate forward rollouts and distances between predicted teammate features.
The stochasticity studies in Appendix~\ref{app:smac-action-slip} and Appendix~\ref{app:stochasticity} show that MAGIC remains useful when separability of branch effects is preserved.
Future work can extend this design with probabilistic or ensemble forward models for settings with stronger transition uncertainty.

\input{checklist.tex}

\end{document}

%% file: checklist.tex
\section*{NeurIPS Paper Checklist}

\begin{enumerate}

\item {\bf Claims}
    \item[] Question: Do the main claims made in the abstract and introduction accurately reflect the paper's contributions and scope?
    \item[] Answer: \answerYes{}
    \item[] 
    Justification: The abstract and introduction state the proposed multi-step advantage-gated action-effect framework and the experimental scope; the claims are supported by Sections~\ref{sec:method} and~\ref{sec:experiments}.
    \item[] Guidelines:
    \begin{itemize}
        \item The answer \answerNA{} means that the abstract and introduction do not include the claims made in the paper.
        \item The abstract and/or introduction should clearly state the claims made, including the contributions made in the paper and important assumptions and limitations. A \answerNo{} or \answerNA{} answer to this question will not be perceived well by the reviewers. 
        \item The claims made should match theoretical and experimental results, and reflect how much the results can be expected to generalize to other settings. 
        \item It is fine to include aspirational goals as motivation as long as it is clear that these goals are not attained by the paper. 
    \end{itemize}

\item {\bf Limitations}
    \item[] Question: Does the paper discuss the limitations of the work performed by the authors?
    \item[] Answer: \answerYes{}
\item[] Justification: Appendix~\ref{app:limitations} discusses limitations related to learned forward-model rollouts, finite rollout horizons, training-time computation, and experimental scope.
    \item[] Guidelines:
    \begin{itemize}
        \item The answer \answerNA{} means that the paper has no limitation while the answer \answerNo{} means that the paper has limitations, but those are not discussed in the paper. 
        \item The authors are encouraged to create a separate ``Limitations'' section in their paper.
        \item The paper should point out any strong assumptions and how robust the results are to violations of these assumptions (e.g., independence assumptions, noiseless settings, model well-specification, asymptotic approximations only holding locally). The authors should reflect on how these assumptions might be violated in practice and what the implications would be.
        \item The authors should reflect on the scope of the claims made, e.g., if the approach was only tested on a few datasets or with a few runs. In general, empirical results often depend on implicit assumptions, which should be articulated.
        \item The authors should reflect on the factors that influence the performance of the approach. For example, a facial recognition algorithm may perform poorly when image resolution is low or images are taken in low lighting. Or a speech-to-text system might not be used reliably to provide closed captions for online lectures because it fails to handle technical jargon.
        \item The authors should discuss the computational efficiency of the proposed algorithms and how they scale with dataset size.
        \item If applicable, the authors should discuss possible limitations of their approach to address problems of privacy and fairness.
        \item While the authors might fear that complete honesty about limitations might be used by reviewers as grounds for rejection, a worse outcome might be that reviewers discover limitations that aren't acknowledged in the paper. The authors should use their best judgment and recognize that individual actions in favor of transparency play an important role in developing norms that preserve the integrity of the community. Reviewers will be specifically instructed to not penalize honesty concerning limitations.
    \end{itemize}

\item {\bf Theory assumptions and proofs}
    \item[] Question: For each theoretical result, does the paper provide the full set of assumptions and a complete (and correct) proof?
    \item[] Answer: \answerYes{}
    \item[] Justification: The assumptions, statements, and proofs for the action-effect score, forward-model error bound, normalization/clipping, and advantage gate are provided in Appendix~A.8--A.10.
    \item[] Guidelines:
    \begin{itemize}
        \item The answer \answerNA{} means that the paper does not include theoretical results. 
        \item All the theorems, formulas, and proofs in the paper should be numbered and cross-referenced.
        \item All assumptions should be clearly stated or referenced in the statement of any theorems.
        \item The proofs can either appear in the main paper or the supplemental material, but if they appear in the supplemental material, the authors are encouraged to provide a short proof sketch to provide intuition. 
        \item Inversely, any informal proof provided in the core of the paper should be complemented by formal proofs provided in appendix or supplemental material.
        \item Theorems and Lemmas that the proof relies upon should be properly referenced. 
    \end{itemize}

    \item {\bf Experimental result reproducibility}
    \item[] Question: Does the paper fully disclose all the information needed to reproduce the main experimental results of the paper to the extent that it affects the main claims and/or conclusions of the paper (regardless of whether the code and data are provided or not)?
    \item[] Answer: \answerYes{}
    \item[] Justification: The algorithm, network architectures, hyperparameters, evaluation metrics, and statistical testing protocol are described in the method section and appendix.
    \item[] Guidelines:
    \begin{itemize}
        \item The answer \answerNA{} means that the paper does not include experiments.
        \item If the paper includes experiments, a \answerNo{} answer to this question will not be perceived well by the reviewers: Making the paper reproducible is important, regardless of whether the code and data are provided or not.
        \item If the contribution is a dataset and\slash or model, the authors should describe the steps taken to make their results reproducible or verifiable. 
        \item Depending on the contribution, reproducibility can be accomplished in various ways. For example, if the contribution is a novel architecture, describing the architecture fully might suffice, or if the contribution is a specific model and empirical evaluation, it may be necessary to either make it possible for others to replicate the model with the same dataset, or provide access to the model. In general. releasing code and data is often one good way to accomplish this, but reproducibility can also be provided via detailed instructions for how to replicate the results, access to a hosted model (e.g., in the case of a large language model), releasing of a model checkpoint, or other means that are appropriate to the research performed.
        \item While NeurIPS does not require releasing code, the conference does require all submissions to provide some reasonable avenue for reproducibility, which may depend on the nature of the contribution. For example
        \begin{enumerate}
            \item If the contribution is primarily a new algorithm, the paper should make it clear how to reproduce that algorithm.
            \item If the contribution is primarily a new model architecture, the paper should describe the architecture clearly and fully.
            \item If the contribution is a new model (e.g., a large language model), then there should either be a way to access this model for reproducing the results or a way to reproduce the model (e.g., with an open-source dataset or instructions for how to construct the dataset).
            \item We recognize that reproducibility may be tricky in some cases, in which case authors are welcome to describe the particular way they provide for reproducibility. In the case of closed-source models, it may be that access to the model is limited in some way (e.g., to registered users), but it should be possible for other researchers to have some path to reproducing or verifying the results.
        \end{enumerate}
    \end{itemize}

\item {\bf Open access to data and code}
    \item[] Question: Does the paper provide open access to the data and code, with sufficient instructions to faithfully reproduce the main experimental results, as described in supplemental material?
    \item[] Answer: \answerYes{}
    \item[] Justification: We include an anonymized supplementary package with the main method implementation. The paper provides the remaining details needed to understand and reproduce the experimental setup, including the algorithm, architectures, hyperparameters, evaluation protocol, and compute information.
    \item[] Guidelines:
    \begin{itemize}
        \item The answer \answerNA{} means that paper does not include experiments requiring code.
        \item Please see the NeurIPS code and data submission guidelines (\url{https://neurips.cc/public/guides/CodeSubmissionPolicy}) for more details.
        \item While we encourage the release of code and data, we understand that this might not be possible, so \answerNo{} is an acceptable answer. Papers cannot be rejected simply for not including code, unless this is central to the contribution (e.g., for a new open-source benchmark).
        \item The instructions should contain the exact command and environment needed to run to reproduce the results. See the NeurIPS code and data submission guidelines (\url{https://neurips.cc/public/guides/CodeSubmissionPolicy}) for more details.
        \item The authors should provide instructions on data access and preparation, including how to access the raw data, preprocessed data, intermediate data, and generated data, etc.
        \item The authors should provide scripts to reproduce all experimental results for the new proposed method and baselines. If only a subset of experiments are reproducible, they should state which ones are omitted from the script and why.
        \item At submission time, to preserve anonymity, the authors should release anonymized versions (if applicable).
        \item Providing as much information as possible in supplemental material (appended to the paper) is recommended, but including URLs to data and code is permitted.
    \end{itemize}

\item {\bf Experimental setting/details}
    \item[] Question: Does the paper specify all the training and test details (e.g., data splits, hyperparameters, how they were chosen, type of optimizer) necessary to understand the results?
    \item[] Answer: \answerYes{}
    \item[] Justification: The training setup, environments, optimization details, and evaluation procedure are specified in the experiments section and appendix.
    \item[] Guidelines:
    \begin{itemize}
        \item The answer \answerNA{} means that the paper does not include experiments.
        \item The experimental setting should be presented in the core of the paper to a level of detail that is necessary to appreciate the results and make sense of them.
        \item The full details can be provided either with the code, in appendix, or as supplemental material.
    \end{itemize}

\item {\bf Experiment statistical significance}
    \item[] Question: Does the paper report error bars suitably and correctly defined or other appropriate information about the statistical significance of the experiments?
    \item[] Answer: \answerYes{}
   \item[] Justification: The experiments use five random seeds. The paper reports standard deviations or shaded bands for the main results, and Appendix~\ref{app:evaluation} describes the scalar metrics and statistical testing procedure.
    \item[] Guidelines:
    \begin{itemize}
        \item The answer \answerNA{} means that the paper does not include experiments.
        \item The authors should answer \answerYes{} if the results are accompanied by error bars, confidence intervals, or statistical significance tests, at least for the experiments that support the main claims of the paper.
        \item The factors of variability that the error bars are capturing should be clearly stated (for example, train/test split, initialization, random drawing of some parameter, or overall run with given experimental conditions).
        \item The method for calculating the error bars should be explained (closed form formula, call to a library function, bootstrap, etc.)
        \item The assumptions made should be given (e.g., Normally distributed errors).
        \item It should be clear whether the error bar is the standard deviation or the standard error of the mean.
        \item It is OK to report 1-sigma error bars, but one should state it. The authors should preferably report a 2-sigma error bar than state that they have a 96\% CI, if the hypothesis of Normality of errors is not verified.
        \item For asymmetric distributions, the authors should be careful not to show in tables or figures symmetric error bars that would yield results that are out of range (e.g., negative error rates).
        \item If error bars are reported in tables or plots, the authors should explain in the text how they were calculated and reference the corresponding figures or tables in the text.
    \end{itemize}

\item {\bf Experiments compute resources}
    \item[] Question: For each experiment, does the paper provide sufficient information on the computer resources (type of compute workers, memory, time of execution) needed to reproduce the experiments?
    \item[] Answer: \answerYes{}
\item[] Justification: Appendix~\ref{app:compute} reports the hardware setup, runtime comparison protocol, training-time overhead, and execution-time cost. MAGIC adds computation only during training and introduces no additional execution-time overhead.
    \item[] Guidelines:
    \begin{itemize}
        \item The answer \answerNA{} means that the paper does not include experiments.
        \item The paper should indicate the type of compute workers CPU or GPU, internal cluster, or cloud provider, including relevant memory and storage.
        \item The paper should provide the amount of compute required for each of the individual experimental runs as well as estimate the total compute. 
        \item The paper should disclose whether the full research project required more compute than the experiments reported in the paper (e.g., preliminary or failed experiments that didn't make it into the paper). 
    \end{itemize}
    
\item {\bf Code of ethics}
    \item[] Question: Does the research conducted in the paper conform, in every respect, with the NeurIPS Code of Ethics \url{https://neurips.cc/public/EthicsGuidelines}?
    \item[] Answer: \answerYes{}
    \item[] Justification: The work is a methodological study on cooperative MARL and does not involve human subjects, private data, or prohibited uses.
    \item[] Guidelines:
    \begin{itemize}
        \item The answer \answerNA{} means that the authors have not reviewed the NeurIPS Code of Ethics.
        \item If the authors answer \answerNo, they should explain the special circumstances that require a deviation from the Code of Ethics.
        \item The authors should make sure to preserve anonymity (e.g., if there is a special consideration due to laws or regulations in their jurisdiction).
    \end{itemize}

\item {\bf Broader impacts}
    \item[] Question: Does the paper discuss both potential positive societal impacts and negative societal impacts of the work performed?
    \item[] Answer: \answerYes{}
    \item[] Justification: The paper includes an impact statement and discusses the work as a general machine learning method for MARL coordination.
    \item[] Guidelines:
    \begin{itemize}
        \item The answer \answerNA{} means that there is no societal impact of the work performed.
        \item If the authors answer \answerNA{} or \answerNo, they should explain why their work has no societal impact or why the paper does not address societal impact.
        \item Examples of negative societal impacts include potential malicious or unintended uses (e.g., disinformation, generating fake profiles, surveillance), fairness considerations (e.g., deployment of technologies that could make decisions that unfairly impact specific groups), privacy considerations, and security considerations.
        \item The conference expects that many papers will be foundational research and not tied to particular applications, let alone deployments. However, if there is a direct path to any negative applications, the authors should point it out. For example, it is legitimate to point out that an improvement in the quality of generative models could be used to generate Deepfakes for disinformation. On the other hand, it is not needed to point out that a generic algorithm for optimizing neural networks could enable people to train models that generate Deepfakes faster.
        \item The authors should consider possible harms that could arise when the technology is being used as intended and functioning correctly, harms that could arise when the technology is being used as intended but gives incorrect results, and harms following from (intentional or unintentional) misuse of the technology.
        \item If there are negative societal impacts, the authors could also discuss possible mitigation strategies (e.g., gated release of models, providing defenses in addition to attacks, mechanisms for monitoring misuse, mechanisms to monitor how a system learns from feedback over time, improving the efficiency and accessibility of ML).
    \end{itemize}
    
\item {\bf Safeguards}
    \item[] Question: Does the paper describe safeguards that have been put in place for responsible release of data or models that have a high risk for misuse (e.g., pre-trained language models, image generators, or scraped datasets)?
    \item[] Answer: \answerNA{}
    \item[] Justification: The paper does not release a dataset, foundation model, or deployed system that requires additional safeguards.
    \item[] Guidelines:
    \begin{itemize}
        \item The answer \answerNA{} means that the paper poses no such risks.
        \item Released models that have a high risk for misuse or dual-use should be released with necessary safeguards to allow for controlled use of the model, for example by requiring that users adhere to usage guidelines or restrictions to access the model or implementing safety filters. 
        \item Datasets that have been scraped from the Internet could pose safety risks. The authors should describe how they avoided releasing unsafe images.
        \item We recognize that providing effective safeguards is challenging, and many papers do not require this, but we encourage authors to take this into account and make a best faith effort.
    \end{itemize}

\item {\bf Licenses for existing assets}
    \item[] Question: Are the creators or original owners of assets (e.g., code, data, models), used in the paper, properly credited and are the license and terms of use explicitly mentioned and properly respected?
    \item[] Answer: \answerYes{}
\item[] Justification: Appendix~\ref{app:env-details} lists the existing benchmark environments and software frameworks used in the experiments, cites the corresponding papers, and states the relevant released licenses or usage terms where applicable.
    \item[] Guidelines:
    \begin{itemize}
        \item The answer \answerNA{} means that the paper does not use existing assets.
        \item The authors should cite the original paper that produced the code package or dataset.
        \item The authors should state which version of the asset is used and, if possible, include a URL.
        \item The name of the license (e.g., CC-BY 4.0) should be included for each asset.
        \item For scraped data from a particular source (e.g., website), the copyright and terms of service of that source should be provided.
        \item If assets are released, the license, copyright information, and terms of use in the package should be provided. For popular datasets, \url{paperswithcode.com/datasets} has curated licenses for some datasets. Their licensing guide can help determine the license of a dataset.
        \item For existing datasets that are re-packaged, both the original license and the license of the derived asset (if it has changed) should be provided.
        \item If this information is not available online, the authors are encouraged to reach out to the asset's creators.
    \end{itemize}

\item {\bf New assets}
    \item[] Question: Are new assets introduced in the paper well documented and is the documentation provided alongside the assets?
    \item[] Answer: \answerNA{}
    \item[] Justification: The paper does not introduce a new dataset, benchmark suite, or model artifact as a reusable asset.
    \item[] Guidelines:
    \begin{itemize}
        \item The answer \answerNA{} means that the paper does not release new assets.
        \item Researchers should communicate the details of the dataset\slash code\slash model as part of their submissions via structured templates. This includes details about training, license, limitations, etc. 
        \item The paper should discuss whether and how consent was obtained from people whose asset is used.
        \item At submission time, remember to anonymize your assets (if applicable). You can either create an anonymized URL or include an anonymized zip file.
    \end{itemize}

\item {\bf Crowdsourcing and research with human subjects}
    \item[] Question: For crowdsourcing experiments and research with human subjects, does the paper include the full text of instructions given to participants and screenshots, if applicable, as well as details about compensation (if any)? 
    \item[] Answer: \answerNA{}
    \item[] Justification: The paper does not involve crowdsourcing or human-subject data collection.
    \item[] Guidelines:
    \begin{itemize}
        \item The answer \answerNA{} means that the paper does not involve crowdsourcing nor research with human subjects.
        \item Including this information in the supplemental material is fine, but if the main contribution of the paper involves human subjects, then as much detail as possible should be included in the main paper. 
        \item According to the NeurIPS Code of Ethics, workers involved in data collection, curation, or other labor should be paid at least the minimum wage in the country of the data collector. 
    \end{itemize}

\item {\bf Institutional review board (IRB) approvals or equivalent for research with human subjects}
    \item[] Question: Does the paper describe potential risks incurred by study participants, whether such risks were disclosed to the subjects, and whether Institutional Review Board (IRB) approvals (or an equivalent approval/review based on the requirements of your country or institution) were obtained?
    \item[] Answer: \answerNA{}
    \item[] Justification: The paper does not involve human participants, so participant risks and institutional review are not applicable.
    \item[] Guidelines:
    \begin{itemize}
        \item The answer \answerNA{} means that the paper does not involve crowdsourcing nor research with human subjects.
        \item Depending on the country in which research is conducted, IRB approval (or equivalent) may be required for any human subjects research. If you obtained IRB approval, you should clearly state this in the paper. 
        \item We recognize that the procedures for this may vary significantly between institutions and locations, and we expect authors to adhere to the NeurIPS Code of Ethics and the guidelines for their institution. 
        \item For initial submissions, do not include any information that would break anonymity (if applicable), such as the institution conducting the review.
    \end{itemize}

\item {\bf Declaration of LLM usage}
    \item[] Question: Does the paper describe the usage of LLMs if it is an important, original, or non-standard component of the core methods in this research? Note that if the LLM is used only for writing, editing, or formatting purposes and does \emph{not} impact the core methodology, scientific rigor, or originality of the research, declaration is not required.
    \item[] Answer: \answerNA{}
    \item[] Justification: The proposed method and experiments do not use LLMs as an original or non-standard component.
    \item[] Guidelines:
    \begin{itemize}
        \item The answer \answerNA{} means that the core method development in this research does not involve LLMs as any important, original, or non-standard components.
        \item Please refer to our LLM policy in the NeurIPS handbook for what should or should not be described.
    \end{itemize}

\end{enumerate}